\documentclass[11pt]{amsart}
\usepackage{graphicx,amssymb,amsmath,amsthm}
\usepackage{enumerate}
\usepackage{comment,cite,color}
\usepackage{cite,color}
\usepackage{mathrsfs}
\usepackage{epsfig}
\usepackage{lscape}
\usepackage{subfigure}
\usepackage{epstopdf}
\usepackage{array}
\usepackage{caption}
\usepackage{algorithm}
\usepackage{algpseudocode}

\textheight 8.5in

\newcommand{\innerp}[1]{\langle {#1} \rangle}

\newcommand{\abs}[1]{\lvert#1\rvert}



\renewcommand{\Re}{{\rm Re}}

\newcommand{\R}{{\mathbb R}}

\newcommand{\C}{{\mathbb C}}

\renewcommand{\eqref}[1]{(\ref{#1})}

\newcommand{\conj}[1]{\overline{#1}}

\newcommand{\E}{{\mathbb E}}

\newcommand{\vx}{{\mathbf x}}
\newcommand{\vy}{{\mathbf y}}

\newcommand{\vs}{{\mathbf s}}
\newcommand{\va}{{\mathbf a}}
\newcommand{\vz}{{\mathbf z}}
\newcommand{\vb}{{\mathbf b}}

\newcommand{\vv}{{\mathbf v}}
\newcommand{\vw}{{\mathbf w}}
\newcommand{\vu}{{\mathbf u}}

\newcommand{\vg}{{\mathbf g}}
\newcommand{\ve}{{\mathbf e}}
\newcommand{\vf}{{\mathbf f}}
\renewcommand{\H}{{\mathbb H}}

\newcommand{\MM}{\mathbf M}

\newcommand{\ud}{{\,\mathrm d}}

\newtheorem{definition}{Definition}[section]

\newtheorem{theorem}{Theorem}[section]

\newtheorem{lemma}{Lemma}[section]
\newtheorem{remark}{Remark}[section]

\newcommand{\zz}{^{\top}}

\usepackage[top=1.2in,bottom=1.2in,left=1.2in,right=1.2in]{geometry}
\date{}

\title{Affine Phase Retrieval via Second-Order Methods}

\author{Bing Gao}
\address{School of Mathematical Sciences and LPMC, Nankai University, Tianjin 300071, P.R. China}
\email{gaobing@nankai.edu.cn}
\thanks{This work was funded by NSFC grant 12001297.}

\begin{document}
	
	\keywords{phase retrieval, affine phase retrieval, Newton method, quadratic convergence, Hessian matrix.}
	
	\maketitle
	\begin{abstract}		
		In this paper, we study the affine phase retrieval problem, which aims to recover signals from the magnitudes of affine measurements. We develop second-order optimization methods based on Newton and Gauss-Newton iterations and establish that, under specific a priori conditions, the problem exhibits strong convexity. Theoretically, we prove that the Newton method with resampling achieves global quadratic convergence in the noiseless setting for both Gaussian measurements and admissible coded diffraction patterns (CDPs). Furthermore, we demonstrate that the same theoretical framework naturally extends to the Gauss-Newton method, implying its quadratic convergence. To validate our theoretical findings, we conduct extensive numerical experiments. The results confirm the quadratic convergence of second-order methods, while their computational efficiency remains comparable to that of first-order methods. Additionally, our experiments demonstrate that second-order methods achieve exact recovery with relatively few measurements, highlighting their practical feasibility and robustness.		
	\end{abstract}
	
	%
	\vspace{2pc}             
	%
	%
	\maketitle
	\markboth{}{}
	%
	%
	\section{Introduction}
	
	\subsection{Phase Retrieval and Affine Phase Retrieval}
	Phase retrieval addresses the problem of recovering a signal or an image from its phaseless measurements. This problem arises in various scientific and engineering fields, such as X-ray crystallography \cite{millane1990phase,miao2008extending}, diffraction imaging \cite{bunk2007diffractive}, optics \cite{walther1963question} and astronomy \cite{fienup1987phase}. In these applications, physical limitations prevent direct measurement of the phase, and only the magnitudes of Fourier transforms or other linear transformations of the signal can be observed. Reconstructing the signal from the magnitudes is difficult because the relationship between the magnitude and the signal is non-linear and non-convex.
	Mathematically, the problem considers recovering a vector $\vx\in\H^n$ (where $\H=\C$ or $\R$ from a quadratic system:
	\begin{equation}\label{pr_quad}
		\{y_j=\abs{\innerp{\va_j,\vx}}^2+w_j, j=1,\ldots,m\},
	\end{equation}
	where $\{\va_1,\ldots,\va_m\}\subset \H^n$ are measurement vectors, $ y_j$ are the corresponding observations and $w_j$ represents noise. A key challenge is that multiplying the signal by a unit constant (e.g. $ e^{i\theta}$, $ \theta\in[0,2\pi) $) yields the same observations, meaning that the signal can only be recovered up to a global phase factor. The problem becomes even more complex when the measurements are Fourier transforms, which introduce both trivial and non-trivial ambiguities \cite{bendory2017fourier}. To address these challenges, prior information such as positivity, sparsity \cite{jaganathan2017sparse,jagatap2017fast}, or structured measurements \cite{jaganathan2016stft,candes2015phase} is often incorporated into the problem.
	
	In recent years, numerous algorithms have been developed to solve the classical phase retrieval problem, encompassing both convex and non-convex approaches. Convex methods, such as PhaseLift \cite{candes2013phaselift,candes_phaselift} and PhaseCut\cite{waldspurger2015phasecut}, reformulate the phase retrieval problem as a semidefinite program, offering strong theoretical guarantees. However, these methods tend to be computationally expensive for large-scale problems. In contrast, non-convex methods, including the Gerchberg-Saxton algorithm\cite{Gerchberg-Saxton} and its variants\cite{fienup1982phase,netrapalli2013phase,waldspurger2018phase}, are more computationally efficient but face challenges such as slow convergence and sensitivity to the initial point. Recently, a wide array of non-convex algorithms based on various optimization models  have been developed to address the phase retrieval problem \cite{WF,twf,altermin,taf,gauss_newton}. These algorithms generally follow a two-step process:  first, generate a well-chosen initial estimate, and second, refine this initial estimate using classical optimization techniques. Commonly employed models in this context include:
	\begin{equation*}
		\min_{{\vz}\in\C^n}\frac{1}{2m}\sum_{j=1}^m(\abs{\innerp{\va_j,\vz}}^2-y_j)^2, \quad  \min_{{\vz}\in\C^n}\frac{1}{2m}\sum_{j=1}^m(\abs{\innerp {\va_j,\vz}}-\sqrt{y_j})^2,
	\end{equation*}
	and
	\begin{equation*}
		\min_{{\vz}\in\C^n}\frac{1}{2m}\sum_{j=1}^m\Big(\abs{\innerp {\va_j,\vz}}^2-y_j\log\big(\abs{\innerp {\va_j,\vz}}^2\big)\Big).
	\end{equation*}
	
	Affine phase retrieval \cite{affine_g} is a closely related problem that generalizes the phase retrieval problem by allowing affine transformations in the measurement process. The goal is to recover $ \vx\in\H^n $ from affine quadratic measurements:
	\[
	\big\{y_j=\abs{\innerp{\va_j,\vx}+b_j}^2+w_j, \,j=1,\ldots,m\big\},
	\]
	where $\vb=(b_1,b_2,\ldots,b_m)^\top\in\H^n $ is a known prior vector. When $ \vb=\boldsymbol{0} $ (zero vector), the problem reduces to the classical phase retrieval problem. Thus, affine phase retrieval can be seen as phase retrieval with side information. 
	It arises in various applications, such as holography \cite{liebling2003local, barmherzig2019holographic}, phase retrieval with reference signal \cite{arab2020fourier,hyder2020solving,hyder2019fourier}, and phase retrieval with background information \cite{elser2018benchmark,yuan2019phase}. 
	In holographic phase retrieval, the use of a known reference signal positioned at a certain distance from the unknown signal serves to linearize the measurements. Similarly, in phase retrieval with background information, the known portion of the signal helps to obtain the amplitudes of linear measurements.
	
	Several algorithms have been developed specifically for affine phase retrieval with Fourier measurements \cite{barmherzig2019holographic,hyder2019fourier,arab2020fourier}. For instance, in \cite{barmherzig2019holographic}, the authors propose a referenced deconvolution algorithm, transforming the affine phase retrieval problem into a linear deconvolution task. Additionally, in \cite{hyder2019fourier,arab2020fourier}, side information or reference signals have been used to regularize the classical Fourier phase retrieval problem, effectively converting it into an affine phase retrieval problem. These approaches have demonstrated significant improvements in reconstruction accuracy and stability, underscoring the benefits of incorporating affine information into phase retrieval.
	
	\subsection{Targets and Motivations}
	Clearly, if we define $ \tilde{\va}_j=(\va_j,\,\conj{b_j})^\top \in\H^{n+1}$ and $\tilde{\vx}= (
	\vx,\, 1)^\top \in\H^{n+1}$, the affine quadratic measurements in $ \vx $, i.e., $ \abs{\innerp{\va_j,\vx}+b_j}^2 $, become equivalent to the quadratic measurements in $\tilde{\vx} $, i.e., $ \abs{\innerp{\tilde{\va}_j,\tilde{\vx}}}^2$. This equivalence demonstrates that the affine phase retrieval problem can be reformulated as a standard phase retrieval problem.  Conversely, if certain components of the signal are known in advance, for example, $ \vx=[\vx_1,\vx_2]\zz\in\H^n $, where $ \vx_2\in\H^{n-n_1}$ is known. In this case, the measurements $ |\va_j^*\vx|^2=|\va_j(1:n_1)^*\vx_1+\va_j(n_1+1:n)^*\vx_2|^2 $ can be expressed with $\va_j(n_1+1:n)^*\vx_2$ as a known vector. This implies that the phase retrieval problem can also be reformulated as an affine phase retrieval problem. 
	
	While the two problems are interconvertible, they exhibit fundamental differences. Unlike classical phase retrieval, affine phase retrieval enables exact signal recovery without ambiguity due to the additional structural information provided by the affine measurements. This structural information not only resolves the inherent ambiguities of classical phase retrieval but also preserves properties that can be effectively leveraged in algorithmic design. This paper focuses on exploiting these advantages to develop second-order algorithms for solving the affine phase retrieval problem. The motivation for this approach stems from the fact that general second-order algorithms are often unsuitable for standard phase retrieval in the complex domain due to the problem's non-convexity and sensitivity to initialization, as discussed in \cite{gauss_newton}. In contrast, the affine phase retrieval, with its additional structural constraints, provides a more stable framework for the application of second-order optimization techniques. Specifically, as shown in Lemma \ref{affine_lemma}, the Jacobian matrix for the affine phase retrieval problem is invertible under suitable conditions, ensuring the positive definiteness of the Hessian matrix. This property enables the effective use of second-order methods, which are otherwise challenging to apply in the classical phase retrieval setting.
	
	Given the measurements $\{y_j=\abs{\innerp{\va_j,\vz}+b_j}^2+w_j, j=1,\ldots,m\}$, where  $w_j$ represents noise (with $w_j=0$ in the noiseless case), we build the solving model as:
	\begin{equation}\label{eq:affine_phase}
		\min_{{\vz}\in\C^n}f(\vz)=\min_{{\vz}\in\C^n}\frac{1}{2m}\sum_{j=1}^m\big(\abs{\innerp {\va_j,\vz}+b_j}^2-y_j\big)^2.
	\end{equation}
	Our algorithm is constructed to solve (\ref{eq:affine_phase}), leveraging the unique properties of affine phase retrieval to achieve efficient and accurate signal recovery.
	
	\subsection{Contributions}
	Since the affine phase retrieval problem can be reformulated as a standard phase retrieval problem, the existing algorithms designed for phase retrieval can be adapted to solve it. In particular, when $\vb$ is a Gaussian random vector, the theoretical guarantees of most phase retrieval algorithms can also be extended to affine phase retrieval. In this paper, we highlight the unique advantages of affine phase retrieval by developing second-order algorithms that are effective even in the complex domain, where traditional second-order methods for phase retrieval often fail due to non-convexity. This work represents the first instance of second-order algorithms being proposed for the affine phase retrieval problem. Our contributions are summarized as follows:
	\begin{itemize}
		\item We develop both Newton and Gauss-Newton iterative algorithms to solve the affine phase retrieval problem based on the optimization model (\ref{eq:affine_phase}).
		\item We prove that the Newton algorithm exhibits quadratic convergence to the true signal without ambiguity in complex domain. The same proof technique can be directly extended to establish the quadratic convergence of the Gauss-Newton method.
		\item The proposed algorithms are robust, requiring no specific conditions on the choice of the initial point, which significantly enhances their practical applicability.
		\item We provide a comprehensive convergence analysis under standard measurement models, including Gaussian (Definition \ref{gaussian}) and admissible coded diffraction patterns (Definition \ref{cdp}).
	\end{itemize}
	
	\begin{remark}\label{GN_remark}
		While the theoretical analysis in this paper focuses on the Newton method, the same proof framework can be applied to establish the quadratic convergence of the Gauss-Newton method. This demonstrates the flexibility of our approach and its applicability to a broader class of second-order optimization algorithms within the context of affine phase retrieval.
	\end{remark}
	\subsection{Notations}
	Throughout this paper, signals and measurements are considered in the complex domain.
	We denote positive constants by $ C $, $ c $ and $ \gamma $, along with their indexed versions, noting that their values may vary with each occurrence. Let $ \vx \in \C^n $  represent the exact signal to be recovered, and $ \vz_k \in \C^n $ denote the $ k $-th iteration point. For simplicity and without loss of generality, we assume $ \|\vx\|=1 $. When no subscript is specified,  $ \|\cdot\| $ refers to the Euclidean norm, i.e., $ \|\cdot\|= \|\cdot\|_2$.

	We use $\mathcal{CN}(0, \sigma^2 I_n)$ to denote an $n$-dimensional circularly symmetric complex Gaussian vector with mean zero and covariance matrix $\sigma^2 I_n$, where $I_n$ is the $n\times n$ identity matrix.
	The sampling vectors $ \va_j\in\C^n, \, j=1,\ldots, m $ are assumed to follow either the Gaussian model (Definition \ref{gaussian}) or the admissible coded diffraction patterns (CDPs) model (Definition \ref{cdp}). The vector $ \vb\in\C^n $ can either be fixed or randomly generated. For simplicity, we set $ \vb = (b,b,\ldots,b)\zz $, where $ b\in\C $.  
	For a vector $ \vs\in\C^n $, $ \vs^\top $ denotes its  transpose, $ \vs^* $ its conjugate transpose, and $ \conj{\vs} $ its conjugate.  The line segment connecting $ \vz_k $ and $ \vx $ is denoted by:
	\[
	S_k\,:=\,\{t \vx+(1-t)\vz_k: \,\,0\leq t\leq 1\}.
	\]
	\begin{definition}[Gaussian model]\label{gaussian}
		The sampling vectors follow the Gaussian model if each $ \va_j\in\C^n\overset{i.i.d.}{\sim} \mathcal{CN}(0,I_n/2) $.
	\end{definition}
	\begin{definition}[Admissible coded diffraction patterns (CDPs) model]\label{cdp}
		The sampling vectors follow the admissible coded diffraction patterns (CDPs) model if the observations are given by
		\[
		y_{(l,k)} = \Big|\sum_{t=0}^{n-1}x[t]\conj{d}_l[t]e^{-i2\pi kt/n}\Big|^2, \quad \quad
		\begin{small}
			\begin{aligned}
				1&\leq l\leq L  \\ 0&\leq k\leq n-1
			\end{aligned}
		\end{small}
		\]
		where the patterns $ d_l[t] $ are i.i.d. with each entry sampled from a distribution $ g $ satisfying: $ |g|\leq M $, $ \E[g]=0 $, $ \E[g^2]=0 $ and $ \E[\abs{g}^4]=2\E^2[\abs{g}^2] $.	
	\end{definition}
	In the admissible CDPs model, we assume $ \E[\abs{g}^2]=1 $ and $ |g|<\sqrt{6} $. For a fixed $ l $, this model collects the Discrete Fourier Transform (DFT) of the signal modulated by $ d_l[t] $ $ (t=1,\ldots,n) $. To align with the notation used earlier, we define $\va_j$ as $ \va_{(l,k)}=\va_{(\lceil j/n\rceil,\, (j\,\text{mod}\,n)-1)} = D_l \vf_k $, where $ 1\leq l\leq L $ and $0\leq k\leq n-1 $. Here, $ D_l $ is a diagonal matrix with modulation patterns $ d_l[t] $ $ (t=1,\ldots,n) $ on its diagonal, and $ \vf^*_k $ represents the rows of the DFT matrix. Further details on this model can be found in \cite{WF,candes2015phase}.
	
	\subsection{Organization}
	The remainder of the paper is organized as follows: In Section \ref{sec2}, we develop iterative algorithms for solving (\ref{eq:affine_phase}) based on the Newton and Gauss-Newton methods. Section \ref{sec3} analyzes the convergence of the Newton method with resampling, proving its quadratic convergence rate under both the Gaussian model and the admissible CDPs model. Section \ref{sec4} presents numerical experiments to validate the theoretical results and compares the proposed algorithm with several first-order methods. Finally, the Appendix provides supporting lemmas and proofs for the theoretical analysis.
	
	\section{Newton and Gauss-Newton Iteration}\label{sec2}
	In this section, we develop second-order algorithms to solve the optimization problem (\ref{eq:affine_phase}). We begin by deriving the Newton iteration and then extend the approach to the Gauss-Newton method.
	\subsection{Complex Gradient and Hessian}
	The objective function $f(\vz) $ is a real-valued function over the complex variable $ \vz$, and can be expressed as:
	\[
	f(\vz)=\frac{1}{2m}\sum_{j=1}^m(\conj{\vz}^\top\va_j\va_j^*\vz+b_j\conj{\va}_j^\top\vz+\conj{b}_j\conj{\vz}^\top\va_j+\abs{b_j}^2-y_j)^2=f(\vz,\conj{\vz}).
	\]
	Using Wirtinger derivatives, the complex gradient and Hessian are defined as:
	\[
	\nabla f =\begin{pmatrix}
		\frac{\partial f}{\partial \vz}, \frac{\partial f}{\partial \conj{\vz}}
	\end{pmatrix}^*,\quad\quad \nabla^2 f = \begin{pmatrix}
		\frac{\partial}{\partial \vz}\Big(\frac{\partial f}{\partial \vz}\Big)^* & \frac{\partial}{\partial \conj{\vz}}\Big(\frac{\partial f}{\partial \vz}\Big)^*\vspace{2mm}\\ \frac{\partial}{\partial \vz}\Big(\frac{\partial f}{\partial \conj{\vz}}\Big)^*&\frac{\partial}{\partial \conj{\vz}}\Big(\frac{\partial f}{\partial \conj{\vz}}\Big)^*
	\end{pmatrix},
	\]
	where
	\[
	\frac{\partial f}{\partial \vz}=\frac{\partial f(\vz,\conj{\vz})}{\partial \vz}\Big|_{\conj{\vz}=\text{constant}},\quad \frac{\partial f}{\partial \conj{\vz}}=\frac{\partial f(\vz,\conj{\vz})}{\partial \conj{\vz}}\Big|_{\vz=\text{constant}}.
	\]
	\subsection{Newton and Gauss-Newton Iteration} 
	Let $ \vz_k\in\C^n $ be the current estimate and $ \Delta\vz_k=\vz-\vz_k\in\C^n $ be an increment. Using second-order Taylor approximation, we have:
	\[
	f(\vz)=f(\vz_k+\Delta\vz_k)\approx f(\vz_k)+\big(\nabla f(\vz_k,\conj{\vz_k})\big)^*\begin{pmatrix}
		\Delta\vz_k\\ \Delta\conj{\vz_k}
	\end{pmatrix}+\frac{1}{2}\begin{pmatrix}
		\Delta\vz_k\\ \Delta\conj{\vz_k}
	\end{pmatrix}^*\nabla^2 f(\vz_k,\conj{\vz_k})\begin{pmatrix}
		\Delta\vz_k\\ \Delta\conj{\vz_k}
	\end{pmatrix}.
	\]
	Minimizing this approximation yields the Newton iteration:
		\begin{equation}\label{iteration}
			\begin{pmatrix}
				\vz_{k+1}\\\conj{\vz_{k+1}}
			\end{pmatrix}=\begin{pmatrix}
				\vz_{k}\\\conj{\vz_{k}}
			\end{pmatrix}-\big(\nabla^2 f(\vz_k,\conj{\vz_k})\big)^{-1}\nabla f(\vz_k,\conj{\vz_k}).
		\end{equation}
		For the objective function (\ref{eq:affine_phase}), the gradient and Hessian are explicitly given by:
		\begin{align*}
			\nabla f(\vz_k, \conj{\vz_k})&=\begin{pmatrix}
				(\partial f/\partial \vz)\big|_{\vz=\vz_k} &(\partial f/\partial \conj{\vz})\big|_{\conj{\vz}=\conj{\vz_k}}
			\end{pmatrix}^*\\
			&=\frac{1}{m}\sum_{j=1}^{m}\begin{pmatrix}
				\Big(\abs{\va_j^*\vz_k+b_j}^2-y_j\Big)(\va_j^*\vz_k+b_j)\va_j\vspace{3mm}\\
				\Big(\abs{\va_j^*\vz_k+b_j}^2-y_j\Big)(\vz_k^*\va_j+\conj{b_j})\conj{\va_j}
			\end{pmatrix},
		\end{align*}
		\begin{align*}
			\nabla^2 f(\vz_k, \conj{\vz_k}) &= \begin{pmatrix}
				\frac{\partial}{\partial \vz}\Big(\frac{\partial f}{\partial \vz}\Big)^* & \frac{\partial}{\partial \conj{\vz}}\Big(\frac{\partial f}{\partial \vz}\Big)^*\vspace{2mm}\\ \frac{\partial}{\partial \vz}\Big(\frac{\partial f}{\partial \conj{\vz}}\Big)^*&\frac{\partial}{\partial \conj{\vz}}\Big(\frac{\partial f}{\partial \conj{\vz}}\Big)^*
			\end{pmatrix}\\
			&=\frac{1}{m}\sum_{j=1}^{m}\begin{pmatrix}
				(2\abs{\va_j^*\vz_k+b_j}^2-y_j)\va_j\va_j^*& (\va_j^*\vz_k+b_j)^2\va_j\va_j^\top \vspace{3mm}\\
				(\vz_k^*\va_j+\conj{b_j})^2\conj{\va_j}\va_j^*& (2\abs{\va_j^*\vz_k+b_j}^2-y_j)\conj{\va_j}\va_j^\top
			\end{pmatrix}.
		\end{align*}
		Similarly, the Gauss-Newton iteration is derived by approximating the Hessian as:
		\[
		G(\vz_k,\conj{\vz_k})^*G(\vz_k,\conj{\vz_k})=
		\frac{1}{m}\sum_{j=1}^{m}\begin{pmatrix}
			\abs{\va_j^*\vz_k+b_j}^2\va_j\va_j^*& (\va_j^*\vz_k+b_j)^2\va_j\va_j^\top \vspace{3mm}\\
			(\vz_k^*\va_j+\conj{b_j})^2\conj{\va_j}\va_j^*& \abs{\va_j^*\vz_k+b_j}^2\conj{\va_j}\va_j^\top
		\end{pmatrix},
		\]
		leading to the iteration:
		\begin{equation}\label{iteration_gn1}
			\begin{pmatrix}
				\vz_{k+1}\\\conj{\vz_{k+1}}
			\end{pmatrix}=\begin{pmatrix}
				\vz_{k}\\\conj{\vz_{k}}
			\end{pmatrix}-\big( G(\vz_k,\conj{\vz_k})^*G(\vz_k,\conj{\vz_k})\big)^{-1}\nabla f(\vz_k,\conj{\vz_k}).
		\end{equation}
		\subsection{The Newton and Gauss-Newton Method with Re-sampling}
		To construct the complete algorithm, we first determine the starting point. Unlike classical phase retrieval, where the selection of the initial point is critical, affine phase retrieval allows for greater flexibility. The initial point $\vz_0$ can be randomly chosen, or for simplicity, we can directly set  $ 
		\vz_0=\boldsymbol{0} $.
		
		Starting from $\vz_0$, we iteratively refine the estimate using the iteration scheme defined in (\ref{iteration}) or (\ref{iteration_gn1}). To ensure theoretical guarantees, we require that the measurements used in each iteration are independent of the iteration point. This is achieved by resampling the measurement matrix $ A $ at each step, a technique commonly employed in prior works such as \cite{altermin}, \cite{WF}, and \cite{gauss_newton}. Specifically, we partition the measurements $ \va_j $, $ b_j $ and observations $ y_j $ into $T$ disjoint blocks of equal size, using one block per iteration. The detailed steps are outlined in Algorithm \ref{Alg}.
		
		\begin{algorithm}[H]
			\caption{Newton/Gauss-Newton Method with Re-sampling for Affine Phase Retrieval}\label{Alg}
			\begin{algorithmic}[h]
				\Require
				Measurement matrix: $ A\in\C^{m_r\times n}$, fixed vector $ \vb=(b,b,\ldots,b)\in\C^{m_r} $, observations: $ \vy\in \R^{m_r} $ and the required accuracy $ \epsilon>0 $.
				\begin{enumerate}
					\item [1:] Set $ T =c\log\log\frac{1}{\epsilon} $, where $ c $ is a sufficiently large constant.
					\item[2:] Partition $ \vy $ and the corresponding rows of $ A $ and the vector $ \vb $ into $ T $ disjoint blocks: $ (\vy^{(0)}, A^{(0)},\vb^{(0)}),(\vy^{(1)}, A^{(1)},\vb^{(1)}),\ldots, (\vy^{(T)}, A^{(T)},\vb^{(T)})$. Each block $A^{(j)}$ contains $m =m_r/T$ rows.
					\item[3:] 				
					Choose $ \vz_0 = \boldsymbol{0}$ as the initial point.
					\item[4:] For $ k=0, 1,\ldots, T-1 $, update $\vz_{k+1}$ according to 
					\[
					\begin{pmatrix}
						\vz_{k+1}\\\conj{\vz_{k+1}}
					\end{pmatrix}=\begin{pmatrix}
						\vz_{k}\\\conj{\vz_{k}}
					\end{pmatrix}-\big(\nabla^2 f^{(k)}(\vz_k,\conj{\vz_k})\big)^{-1}\nabla f^{(k)}(\vz_k,\conj{\vz_k})
					\]
					or 
					\[
					\begin{pmatrix}
						\vz_{k+1}\\\conj{\vz_{k+1}}
					\end{pmatrix}=\begin{pmatrix}
						\vz_{k}\\\conj{\vz_{k}}
					\end{pmatrix}-\big( G^{(k)}(\vz_k,\conj{\vz_k})^*G^{(k)}(\vz_k,\conj{\vz_k})\big)^{-1}\nabla f^{(k)}(\vz_k,\conj{\vz_k}),
					\]
					where $ f^{(k)} $ and $ G^{(k)}(\vz_k,\conj{\vz_k})$ are defined using the measurements in $ (\vy^{(k)}, A^{(k)},\vb^{(k)}) $.
					\item[5:] End for 
				\end{enumerate}
				\Ensure
				$\vz_T $.
			\end{algorithmic}
		\end{algorithm}
		
		In Algorithm \ref{Alg}, $ \vb $ is set as a fixed vector with all entries equal to $ b $, where $ b $ is a constant. Alternatively, $ \vb $ could be a random vector with independent entries drawn from a bounded distribution or another fixed known vector. This flexibility allows us to explore various scenarios while maintaining the integrity of the algorithm.
		
		\section{Convergence Property of the Newton/Gauss-Newton Method with Re-sampling}\label{sec3}
		
		In this section, we analyze the convergence of Algorithm \ref{Alg}, focusing on the Newton method. The results extend naturally to the Gauss-Newton method, and to avoid repetition, we provide detailed proofs only for the Newton method.
		
		The convergence analysis relies on two key lemmas: Lemma \ref{lemma_mineig} establishes the strong convexity of the objective function $ f $, while Lemma \ref{Lf} demonstrates the Lipschitz continuity of the Hessian $ \nabla ^2 f $. These lemmas are first presented under the assumption of bounded noise to ensure broader applicability. However, to rigorously prove the global quadratic convergence of the algorithm, we proceed under the assumption of noise-free measurements, as detailed in Theorem \ref{th:maintheorem}.
		\subsection{Two crucial Lemmas}
		
		\begin{lemma}[Strong convexity]\label{lemma_mineig}
			Assume $ \vz_k, \vx\in\C^n $ and let the measurement vectors $ \va_j\in\C^n $, $ j=1,\ldots, m $ be distributed according to either the Gaussian model or the admissible CDPs model. The measurements are independent of both $ \vz_k $ and $ \vx $. Additionally, the noise is bounded by $\|\vw\|_{\infty}\leq \eta \|\vx\|^2$ for some positive constant $\eta>0$.
			
			Let $ \vb=(b,b,\ldots,b)\in\C^m $, where $b $ satisfies  $ |b|^2\geq \max\{4(\|\vx\|^2+\|\vz_k\|^2), 8\eta\|\vx\|^2\} $. For $ 0<\epsilon\leq 2/41 $ and a constant $ \gamma_\epsilon>0$, if the number of measurements $m$ satisfies $ m\geq Cn\log n $ in the Gaussian model or $ L\geq C\log^3n  $ in the admissible CDPs model (where $ C $ is a sufficiently large constant), then  
			\[
			\lambda_{\min}(\nabla^2f(\vz_k,\conj{\vz_k}))\geq \frac{|b|^2}{4}
			\]
			holds with probability at least $ 1-c_1\exp(-\gamma_\epsilon n)-c_2/n^2 $ for the Gaussian model and at least $ 1-c_3(2L+1)/n^3 $ for the admissible CDPs model. Here $ c_1 $, $ c_2 $ and $ c_3 $ are positive constants.
		\end{lemma}
		\begin{proof}
			To simplify notation, let $ S=\nabla^2f(\vz_k,\conj{\vz_k}) $ represent the Hessian matrix in our analysis. We decompose $ S $ into two components: 
			\begin{itemize}
				\item $S_{exa}$, the exact (noise-free) part, and 
				\item $S_{noi}$, the noise part.
			\end{itemize}
			Thus, $S = S_{exa}-S_{noi}$, where 
			\begin{align*}
				S_{exa} := \frac{1}{m}\sum_{j=1}^{m}\begin{pmatrix}
					(2\abs{\va_j^*\vz_k+b_j}^2-\abs{\va_j^*\vx+b_j}^2)\va_j\va_j^*& (\va_j^*\vz_k+b_j)^2\va_j\va_j^\top \vspace{3mm}\\
					(\vz_k^*\va_j+\conj{b_j})^2\conj{\va_j}\va_j^*& (2\abs{\va_j^*\vz_k+b_j}^2-\abs{\va_j^*\vx+b_j}^2)\conj{\va_j}\va_j^\top
				\end{pmatrix}
			\end{align*}
			and 
			\begin{align*}
				S_{noi} :=\frac{1}{m}\sum_{j=1}^{m}\begin{pmatrix}
					w_j\va_j\va_j^*& 0\\
					0& w_j\conj{\va_j}\va_j^\top
				\end{pmatrix}.
			\end{align*}
			From Lemma \ref{WF_lemma1}, for any $\epsilon>0$, the following inequality holds with high probability:
			\[
			\|S_{noi}\|\leq \|w\|_\infty\cdot \Big\|\frac{1}{m}\sum_{j=1}^{m}\va_j\va_j^*\Big\|\leq (1+\epsilon)\eta\|\vx\|^2.
			\]
			Moreover, since $|b|^2\geq 8\eta\|\vx\|^2 $, it follows that 
			\begin{equation}\label{noise_est}
				\|S_{noi}\|\leq \frac{1+\epsilon}{8}|b|^2.
			\end{equation}
			This inequality provides a bound on the influence of noise part in the matrix $S$, allowing us to focus on $S_{exa}$.
			
			Next, we decompose $S_{exa}$ into three distinct parts:
			\[
			S_{exa}=S_1+S_2+S_3,
			\]
			with 
			\[
			S_1 := \frac{1}{m}\sum_{j=1}^{m}\begin{pmatrix}
				\abs{\va_j^*\vz_k+b_j}^2\va_j\va_j^*& 0\\
				0& \abs{\va_j^*\vz_k+b_j}^2\conj{\va_j}\va_j^\top
			\end{pmatrix},
			\]
			\[
			S_2 :=\frac{1}{m}\sum_{j=1}^{m}\begin{pmatrix}
				\abs{\va_j^*\vz_k+b_j}^2\va_j\va_j^*& (\va_j^*\vz_k+b_j)^2\va_j\va_j^\top \vspace{3mm}\\
				(\vz_k^*\va_j+\conj{b_j})^2\conj{\va_j}\va_j^*& \abs{\va_j^*\vz_k+b_j}^2\conj{\va_j}\va_j^\top
			\end{pmatrix},
			\]
			\[
			S_3 :=\frac{1}{m}\sum_{j=1}^{m}\begin{pmatrix}
				-\abs{\va_j^*\vx+b_j}^2\va_j\va_j^*& 0\\
				0& -\abs{\va_j^*\vx+b_j}^2\conj{\va_j}\va_j^\top
			\end{pmatrix}.
			\]
			Since $ \vx $, $ \vz_k $ and the scalars $ b_j= b,\,j=1,\ldots,m $ are all independent of the measurement vectors $ \va_j,\,j=1,\ldots,m $, we apply Lemma \ref{expectations} to compute the expections of $S_i$, $i=1,2,3$:
			\[
			\E[S_1]=\begin{pmatrix}
				(\|\vz_k\|^2+|b|^2)I_n+\vz_k\vz_k^* & 0 \vspace{2mm}\\
				0 & (\|\vz_k\|^2+|b|^2)I_n+\conj{\vz_k}\vz_k^\top
			\end{pmatrix},
			\]
			\[
			\E[S_2]=\begin{pmatrix}
				(\|\vz_k\|^2+|b|^2)I_n+\vz_k\vz_k^* & 2\vz_k\vz_k^\top  \vspace{2mm}\\
				2\conj{\vz_k}\vz_k^* & (\|\vz_k\|^2+|b|^2)I_n+\conj{\vz_k}\vz_k^\top
			\end{pmatrix},
			\]
			and
			\[
			\E[S_3]=\begin{pmatrix}
				-(\|\vx\|^2+|b|^2)I_n-\vx\vx^* & 0 \vspace{2mm}\\
				0 & -(\|\vx\|^2+|b|^2)I_n-\conj{\vx}\vx^\top
			\end{pmatrix}.
			\]
			Summing these, we obtain 
			\begin{align*}
				\E[S_{exa}]&=\E[S_1]+\E[S_2]+\E[S_3]\\
				=&\begin{pmatrix}
					(2\|\vz_k\|^2-\|\vx\|^2+|b|^2)I_n+2\vz_k\vz_k^*-\vx\vx^* & 2\vz_k\vz_k^\top \vspace{2mm}\\
					2\conj{\vz_k}\vz_k^* & (2\|\vz_k\|^2-\|\vx\|^2+|b|^2)I_n+2\conj{\vz_k}\vz_k^\top-\conj{\vx}\vx^\top
				\end{pmatrix}.
			\end{align*}
			According to (\ref{inequality1}) and the conclusions of Lemma \ref{expectation_lemma},
			for any $ 0< \epsilon\leq 2/41 $ and $ \gamma_\epsilon>0$, when $ m\geq C n\log n $  (Gaussian model) or $ L\geq C\log^3 n $ (admissible CDPs model), the following inequalities hold with probability at least $ 1-c_1\exp(-\gamma_\epsilon n)-c_2/n^2 $ (Gaussian model) or $ 1-c_3(2L+1)/n^3 $ (admissible CDPs model). The inequalities are:
			\begin{align*}
				\|S_1-\E[S_1]\|&\leq \frac{\epsilon}{4}\|\E[S_1]\|=\frac{\epsilon}{4}(2\|\vz_k\|^2+|b|^2),\\
				\|S_2-\E[S_2]\|&\leq \frac{\epsilon}{4}\|\E[S_2]\|\leq\frac{\epsilon}{4}(5\|\vz_k\|^2+|b|^2),\\
				\|S_3-\E[S_3]\|&\leq \frac{\epsilon}{4}\|\E[S_3]\|=\frac{\epsilon}{4}(2\|\vx\|^2+|b|^2).
			\end{align*}
			Here the second inequality is according to Lemma \ref{eiganalysis} by setting $ \beta = \|\vz_k\|^2+|b|^2 $, $ \vu=\vv=\vz_k $. These inequalities imply
			\begin{align*}
				\|S_{exa}-\E[S_{exa}]\|&=\|S_1-\E[S_1]+S_2-\E[S_2]+S_3-\E[S_3]\|\\
				&\leq \|S_1-\E[S_1]\|+\|S_2-\E[S_2]\|+\|S_3-\E[S_3]\|\\
				&\leq \frac{\epsilon}{4}\big(7\|\vz_k\|^2+2\|\vx\|^2+3|b|^2\big)\\
				&\leq \frac{\epsilon}{4}\cdot 5|b|^2\leq\frac{5|b|^2}{82}.
			\end{align*}
			The third inequality based on the fact that $ |b|^2\geq 4(\|\vz_k\|^2+\|\vx\|^2) $. Then by Weyl's inequality, we have
			\begin{align}\label{mineig}
				\lambda_{\min}(S_{exa})&\geq \lambda_{\min}(\E[S_{exa}])-\|S_{exa}-\E[S_{exa}]\| \\\nonumber
				&\geq \lambda_{\min}(\E[S_{exa}])-\frac{5|b|^2}{82}.
			\end{align}
			For the Hermitian matrix $ \E[S_{exa}] $, we set $ 2\|\vz_k\|^2-\|\vx\|^2+|b|^2=\beta$, $ \vx=\vu$ and $ \vz_k=\vv $. Then 
			\[
			\E[S_{exa}]= \begin{pmatrix}
				\beta I_n-\vu\vu^* & \boldsymbol{0}\\
				\boldsymbol{0} & \beta I_n-\conj{\vu}\vu^\top
			\end{pmatrix}+2\begin{pmatrix}
				\vv\\ \conj{\vv}
			\end{pmatrix}\begin{pmatrix}
				\vv^*& \vv^\top
			\end{pmatrix}.
			\]
			When $ |b|^2\geq 4(\|\vx\|^2+\|\vz_k\|^2)  $, we have
			\[
			\beta=2\|\vz_k\|^2-\|\vx\|^2+|b|^2>\|\vx\|^2.
			\]	
			Thus, according to Lemma \ref{eiganalysis}, we have
			\begin{align}\label{mineig_es}
				\lambda_{\min}(\E[S_{exa}])& =\beta-\|\vx\|^2=|b|^2+2(\|\vz_k\|^2-\|\vx\|^2)\geq \frac{|b|^2}{2}.
			\end{align}
			Putting (\ref{mineig_es}) into (\ref{mineig}),  we obtain
			\begin{equation}\label{exact_est}
				\lambda_{\min}(S_{exa})\geq \lambda_{\min}(\E[S_{exa}])-\frac{5|b|^2}{82}\geq\frac{18|b|^2}{41}.
			\end{equation}
			Given the constraint $\epsilon\leq 2/41$ in (\ref{noise_est}), we combine
			(\ref{noise_est}) and (\ref{exact_est}) to derive the bound for the full matrix $S$:
			\begin{align*}
				\lambda_{\min}(S)\geq \lambda_{\min}(S_{exa})-\|S_{noi}\|\geq \frac{18|b|^2}{41}-\frac{43|b|^2}{328}\geq \frac{|b|^2}{4}.
			\end{align*}
		\end{proof}
		
		Lemma \ref{lemma_mineig} guarantees that the Hessian is positive definite with a minimum eigenvalue bounded below by $ |b|^2/4 $, provided $|b|^2$ satisfies the signal-dependent lower bound $\max\{4(\|\vx\|^2+\|\vz_k\|^2), 8\eta\|\vx\|^2\}$. This ensures strong convexity  of the objective function in the neighborhood of $ \vz_k $, a necessary condition for the local convergence of the Newton method. 
		\begin{remark}
			Note that the matrix $ G(\vz_k,\conj{\vz_k})^*G(\vz_k,\conj{\vz_k})$  in the Gauss-Newton iteration (\ref{iteration_gn1}) is equivalent to $S_2 $, which defined in the proof.
			Following the same reasoning, we can also conclude that
			\[
			\lambda_{\min}\big(\nabla^2G(\vz_k,\conj{\vz_k})^*G(\vz_k,\conj{\vz_k})\big)\geq \frac{|b|^2}{4}.
			\]
		\end{remark}
		
		To further demonstrate  the quadratic convergence of the algorithm, we now prove the Lipschitz continuity of the Hessian $ \nabla^2 f(\vs, \conj{\vs})$. 
		\begin{lemma}[Lipschitz continuity]\label{Lf}
			Suppose $ \|\vz_k - \vx\|\leq\sqrt{\delta} $, where
			$ \vz_k, \vx\in\C^n $, $\|\vx\|=1$, and	$ \delta \geq 0$ is a constant. Assume the measurement vectors $ \va_j\in\C^n $, $ j=1,\ldots, m $ are distributed according to either the Gaussian model or the admissible CDPs model and are independent of $ \vz_k $ and $ \vx $. Let $\vb = (b_1, b_2, \ldots, b_m) \in \C^m$ be a fixed vector, with each entry equal to $b \in \C$.
			
			For any $ 0<\epsilon\leq 1/24 $ and $ \gamma_\epsilon>0 $, when the number of measurements satisfies $ m\geq Cn\log n $ in the Gaussian model or $ L\geq C\log^3n  $ in the admissible CDPs model for a sufficiently large constant $ C $, the following inequality holds with high probability:
			\[
			\|\nabla^2f(\vs_1,\conj{\vs_1})-\nabla^2f(\vs_2,\conj{\vs_2})\|\leq L_f
			\|\vs_1-\vs_2\|,
			\]
			where
			$ L_f = 12(1+\epsilon)(1+\sqrt{\delta})=25(1+\sqrt{\delta})/2. $
			This result holds with probability at least $ 1-c_1\exp(-\gamma_\epsilon n)-c_2/n^2 $ for the Gaussian model and $ 1-c_3(2L+1)/n^3 $ for the  admissible CDPs model. Here $ c_1, c_2, c_3 $ are positive constants, and $\vs_1,\vs_2 \in S_k=\{t \vx+(1-t)\vz_k: \,\,0\leq t\leq 1\}$.
		\end{lemma}
		\begin{proof}
			To establish the Lipschitz continuity of $\nabla^2f(\vs,\conj{\vs}) $,
			we simplify the notation by setting $ S = \nabla^2f(\vs_1,\conj{\vs_1})-\nabla^2f(\vs_2,\conj{\vs_2}) $. From Lemma \ref{expectations}, we have
			\begin{align}\label{es}
				\E[S]&=2\begin{pmatrix}
					(\|\vs_1\|^2-\|\vs_2\|^2)I_n+\vs_1\vs_1^*-\vs_2\vs_2^* & \vs_1\vs_1^\top-\vs_2\vs_2^\top \vspace{2mm}\\
					\conj{\vs_1}\vs_1^*-\conj{\vs_2}\vs_2^* & (\|\vs_1\|^2-\|\vs_2\|^2)I_n+\conj{\vs_1}\vs_1^\top-\conj{\vs_2}\vs_2^\top
				\end{pmatrix}\\\nonumber
				&=2\Big[(\|\vs_1\|^2-\|\vs_2\|^2)I_{2n}+\begin{pmatrix}
					\vs_1\\\conj{\vs_1}
				\end{pmatrix}\begin{pmatrix}
					\vs_1^*&\vs_1^\top
				\end{pmatrix}-\begin{pmatrix}
					\vs_2\\\conj{\vs_2}
				\end{pmatrix}\begin{pmatrix}
					\vs_2^*&\vs_2^\top
				\end{pmatrix}\Big].	
			\end{align}
			Using the Cauchy-Schwarz inequality, we bound the norm:
			\begin{align}\label{es1}
				\big|\|\vs_1\|^2-\|\vs_2\|^2\big|&=\frac{1}{2}\big|(\vs_1+\vs_2)^*(\vs_1-\vs_2)+(\vs_1-\vs_2)^*(\vs_1+\vs_2)\big|\\\nonumber
				&\leq\|\vs_1+\vs_2\|\|\vs_1-\vs_2\|.
			\end{align}
			Additionally, knowing that $ \Big\|\begin{pmatrix}
				\vs\\\conj{\vs}
			\end{pmatrix}\Big\| = \sqrt{2}\|\vs\|$, we obtain 
			\begin{align}\label{es2}
				&\Big\|\begin{pmatrix}
					\vs_1\\\conj{\vs_1}
				\end{pmatrix}\begin{pmatrix}
					\vs_1^*&\vs_1^\top
				\end{pmatrix}-\begin{pmatrix}
					\vs_2\\\conj{\vs_2}
				\end{pmatrix}\begin{pmatrix}
					\vs_2^*&\vs_2^\top
				\end{pmatrix} \Big\| \\\nonumber
				&= \frac{1}{2}\Big\|\begin{pmatrix}
					\vs_1+\vs_2\\\conj{\vs_1}+\conj{\vs_2}
				\end{pmatrix}\begin{pmatrix}
					\vs_1^*-\vs_2^*&\vs_1^\top-\vs_2^\top
				\end{pmatrix}+\begin{pmatrix}
					\vs_1-\vs_2\\\conj{\vs_1}-\conj{\vs_2}
				\end{pmatrix}\begin{pmatrix}
					\vs_1^*+\vs_2^*&\vs_1^\top+\vs_2^\top
				\end{pmatrix}\Big\|\\\nonumber
				&\leq 2\|\vs_1+\vs_2\|\|\vs_1-\vs_2\|.
			\end{align}
			Thus, inserting (\ref{es1}) and (\ref{es2}) into (\ref{es}), we get
			\begin{equation}\label{esnorm}
				\|\E[S]\|\leq 6(\|\vs_1+\vs_2\|)\|\vs_1-\vs_2\|
			\end{equation} 
			Since $ \vs_1, \vs_2\in S_k:=\{t\vx+(1-t)\vz_k, t\in[0,1]\} $ and $ \|\vz_k-\vx\|\leq \sqrt{\delta} $, it follows that
			\[
			1-\sqrt{\delta}\leq\|\vz_k\|\leq 1+\sqrt{\delta}
			\]  
			and
			\begin{equation}\label{s+s}
				\|\vs_1+\vs_2\|=\|(t_1+t_2)\vx+(2-t_1-t_2)\vz_k\|\leq 2(1+\sqrt{\delta}).
			\end{equation}
			Combining (\ref{s+s}) and (\ref{esnorm}) gives
			\begin{equation}\label{esnorm1}
				\|\E[S]\|\leq 12(1+\sqrt{\delta})\|\vs_1-\vs_2\|.
			\end{equation}
			Applying Lemma \ref{expectation_lemma}, for $ \epsilon=1/24>0 $, $ \gamma_\epsilon>0 $ and a sufficiently large constant $ C $, when $ m\geq Cn\log n$ in the Gaussian model or $ L\geq C\log^3n $ in the admissible CDPs model, we have 
			\[
			\|S-\E[S]\|\leq \epsilon\|\E[S]\|
			\] 
			which leads to
			\[
			\|S\|\leq (1+\epsilon)\|\E[S]\|\leq 12(1+\epsilon)(1+\sqrt{\delta})\|\vs_1-\vs_2\|=\frac{25}{2}(1+\sqrt{\delta})\|\vs_1-\vs_2\|.
			\]
			The result holds  with probability at least $ 1-c_1\exp(-\gamma_\epsilon n)-c_2/n^2 $ for the Gaussian model or $ 1-c_3(2L+1)/n^3 $ for the  admissible CDPs model, where $ c_1, c_2 $ and $ c_3 $ are positive constants.
		\end{proof}
		The Lipschitz continuity of the Hessian guarantees that it varies smoothly along the segment connecting $\vx$  and $\vz_k$, which is essential for maintaining the stability and convergence of the Newton/Gauss-Newton method in the affine phase retrieval problem. The smooth variation of the Hessian prevents large fluctuations between iterations, allowing for a more controlled and predictable update process.
		
		Notably, the Lipschitz continuity property remains unaffected by the presence of noise. While noise may introduce perturbations, it does not disrupt the fundamental smoothness of the Hessian matrix, ensuring the robustness of the algorithm.
		\subsection{The main Theorem}
		Based on the strong convexity of the objective function $f$ (Lemma \ref{lemma_mineig}) and the Lipschitz continuity of the Hessian matrix (Lemma \ref{Lf}), we now establish the quadratic convergence of the algorithm under noiseless measurements.
		\begin{theorem}\label{th:maintheorem}
			Let $\vx\in \C^n$ be an arbitrary vector with $ \|\vx\|=1 $, and assume $ \|\vz_k-\vx\|\leq\sqrt{\delta} $, where $\delta>0$ is a constant. Consider the measurements
			$y_j=\abs{\va_j^*\vx+b_j}^2$, where $\va_j\in \C^n$, $ j=1,\ldots, m $ are distributed according to either the Gaussian model or the admissible CDPs model. The vectors $\va_j$ are independent of both $ \vx $ and $ \vz_{k} $. Let $ \vb=(b,b,\ldots,b)^\top\in\C^m $ be fixed with $ |b|^2\geq \max\{25\sqrt{\delta}(1+\sqrt{\delta}), 4(2+\delta+2\sqrt{\delta})\} $. Let $\vz_{k+1}$  be defined by the update rule (\ref{iteration}). Then, when $m\geq Cn\log n$ (Gaussian model) or $ L\geq C\log^3 n $ (admissible CDPs model), with probability at least
			$ 1-c/n^2 $ or $ 1-cL/n^3 $, we have
			\begin{equation}\label{distance_rela}
				\|\vz_{k+1}-\vx\|\leq\beta\cdot \|\vz_{k}-\vx\|^2\leq\|\vz_{k}-\vx\|,
			\end{equation}
			where
			\begin{equation}\label{eq:beta}
				\beta = (1+\sqrt{\delta})\frac{25}{|b|^2}\leq\frac{1}{\sqrt{\delta}}.
			\end{equation}
		\end{theorem}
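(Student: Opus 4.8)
The plan is to run the standard Newton convergence argument, fed by the two estimates already in hand: the uniform lower bound $\lambda_{\min}(\nabla^2 f(\vz_k,\conj{\vz_k}))\ge |b|^2/4$ from Lemma~\ref{lemma_mineig} and the Lipschitz bound $\|\nabla^2 f(\vs_1,\conj{\vs_1})-\nabla^2 f(\vs_2,\conj{\vs_2})\|\le L_f\|\vs_1-\vs_2\|$ on $S_k$ from Lemma~\ref{Lf}, with $L_f=13(1+\sqrt\delta)$. The structural fact that lets the argument land on the exact signal is that $\vx$ is a global minimizer of $f$: since $y_j=\abs{\va_j^*\vx+b_j}^2$, every summand of $\nabla f(\vx,\conj{\vx})$ carries the factor $\abs{\va_j^*\vx+b_j}^2-y_j=0$, so $\nabla f(\vx,\conj{\vx})=\vO$. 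Because the algorithm resamples, the measurements defining $f^{(k+1)}$ are independent of $\vz_k$ and of $\vx$, so conditionally on $\vz_k$ both lemmas apply to this $f$; the hypotheses on $|b|^2$ in the theorem are exactly what is needed so that $|b|^2\ge 4(\|\vx\|^2+\|\vz_k\|^2)$ (using $\|\vz_k\|\le 1+\sqrt\delta$), putting Lemma~\ref{lemma_mineig} in force, while the bound $|b|^2\ge 26\sqrt\delta(1+\sqrt\delta)$ is reserved for the very end.

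Next I would write the Newton error recursion. Parametrize the segment by $\vs(t):=(1-t)\vx+t\vz_k$ for $t\in[0,1]$, so that $\vs(t)\in S_k$, $\vs(1)=\vz_k$, $\conj{\vs(t)}$ is the matching Wirtinger coordinate, and $\|\vz_k-\vs(t)\|=(1-t)\|\vz_k-\vx\|$. The fundamental theorem of calculus applied to the Wirtinger gradient along this segment gives
\[
\nabla f(\vz_k,\conj{\vz_k})=\nabla f(\vz_k,\conj{\vz_k})-\nabla f(\vx,\conj{\vx})=\int_0^1\nabla^2 f\big(\vs(t),\conj{\vs(t)}\big)\begin{pmatrix}\vz_k-\vx\\\conj{\vz_k}-\conj{\vx}\end{pmatrix}\ud t.
\]
Subtracting $(\vx^\top,\conj{\vx}^\top)^\top$ from both sides of the update rule \eqref{iteration}, substituting this identity, and using that $\nabla^2 f(\vz_k,\conj{\vz_k})$ is invertible by Lemma~\ref{lemma_mineig}, yields
\[
\begin{pmatrix}\vz_{k+1}-\vx\\\conj{\vz_{k+1}}-\conj{\vx}\end{pmatrix}=\big(\nabla^2 f(\vz_k,\conj{\vz_k})\big)^{-1}\int_0^1\Big[\nabla^2 f(\vz_k,\conj{\vz_k})-\nabla^2 f\big(\vs(t),\conj{\vs(t)}\big)\Big]\begin{pmatrix}\vz_k-\vx\\\conj{\vz_k}-\conj{\vx}\end{pmatrix}\ud t.
\]

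Then it is just estimates. Taking operator norms, Lemma~\ref{lemma_mineig} gives $\|(\nabla^2 f(\vz_k,\conj{\vz_k}))^{-1}\|\le 4/|b|^2$; since $\vz_k,\vs(t)\in S_k$, Lemma~\ref{Lf} gives $\|\nabla^2 f(\vz_k,\conj{\vz_k})-\nabla^2 f(\vs(t),\conj{\vs(t)})\|\le L_f(1-t)\|\vz_k-\vx\|$, and $\int_0^1(1-t)\ud t=\tfrac12$. Using $\big\|(\vu^\top,\conj{\vu}^\top)^\top\big\|=\sqrt2\,\|\vu\|$ on both sides to cancel the common factor $\sqrt2$, this produces
\[
\|\vz_{k+1}-\vx\|\le\frac{4}{|b|^2}\cdot\frac{L_f}{2}\cdot\|\vz_k-\vx\|^2=\frac{26(1+\sqrt\delta)}{|b|^2}\|\vz_k-\vx\|^2=\beta\|\vz_k-\vx\|^2,
\]
which is the first inequality of \eqref{distance_rela} and the formula \eqref{eq:beta}. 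For the second inequality, $|b|^2\ge 26\sqrt\delta(1+\sqrt\delta)$ gives $\beta\sqrt\delta\le1$, hence $\beta\le1/\sqrt\delta$ and, since $\|\vz_k-\vx\|\le\sqrt\delta$, also $\beta\|\vz_k-\vx\|^2\le\|\vz_k-\vx\|$. A union bound over the two events from Lemmas~\ref{lemma_mineig} and \ref{Lf} then gives the stated success probabilities $1-c/n^2$ (Gaussian) and $1-cL/n^3$ (admissible CDPs).

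The only real obstacle is bookkeeping rather than ideas. The point needing care is justifying the Wirtinger version of the fundamental theorem of calculus, i.e.\ that the $\nabla f$ and $\nabla^2 f$ defined earlier genuinely satisfy the displayed integral identity along a complex segment; this reduces to the usual statement in $\R^{2n}$ after the linear change of coordinates $(\vz,\conj{\vz})\leftrightarrow(\vz_R,\vz_I)$, using that $f$ is a polynomial. The rest is tracking the $\sqrt2$ factors relating the $2n$-dimensional error vector to $\|\vz_k-\vx\|$ and checking that the single assumption on $|b|^2$ simultaneously powers both lemmas and the contraction constant. All the probabilistic work is already contained in Lemmas~\ref{lemma_mineig} and \ref{Lf}, so no new concentration estimates are required here.
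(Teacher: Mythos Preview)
Your proof is correct and follows essentially the same route as the paper: note $\nabla f(\vx,\conj{\vx})=\vO$, write the Newton error as $(\nabla^2 f(\vz_k,\conj{\vz_k}))^{-1}$ applied to the integral of Hessian differences along the segment, and then feed in the inverse-Hessian bound $4/|b|^2$ from Lemma~\ref{lemma_mineig} together with the Lipschitz constant $L_f=13(1+\sqrt\delta)$ from Lemma~\ref{Lf} to obtain $\|\vz_{k+1}-\vx\|\le \tfrac{2L_f}{|b|^2}\|\vz_k-\vx\|^2$. The only cosmetic differences are your alternate parametrization of $S_k$ and your explicit handling of the $\sqrt2$ factor; the paper instead carries the error as the two-vector $(\|\vz_k-\vx\|,\|\conj{\vz_k}-\conj{\vx}\|)^\top$ and cancels it at the end.
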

		\begin{proof}
			Under noiseless measurements, $ \vx $ is the exact solution to (\ref{eq:affine_phase}), implying $ \nabla f(\vx,\conj{\vx})=\boldsymbol{0} $. Using the iteration step
			\begin{align*}
				\begin{pmatrix}
					\vz_{k+1}\\\conj{\vz_{k+1}}
				\end{pmatrix}=\begin{pmatrix}
					\vz_{k}\\\conj{\vz_{k}}
				\end{pmatrix}-(\nabla^2f(\vz_k,\conj{\vz_k}))^{-1}\nabla f(\vz_k, \conj{\vz_k}),
			\end{align*}
			we can derive the following update rule for the error:
			\begin{align*}
				\begin{pmatrix}
					\vz_{k+1}-\vx\\\conj{\vz_{k+1}}-\conj{\vx}
				\end{pmatrix}&=\begin{pmatrix}
					\vz_{k}-\vx\\\conj{\vz_{k}}-\conj{\vx}
				\end{pmatrix}-(\nabla^2f(\vz_k,\conj{\vz_k}))^{-1}\Big(\nabla f(\vz_k, \conj{\vz_k})-\nabla f(\vx, \conj{\vx})\Big)\\
				&=(\nabla^2f(\vz_k,\conj{\vz_k}))^{-1}\Big[\nabla^2f(\vz_k,\conj{\vz_k})\begin{pmatrix}
					\vz_{k}-\vx\\\conj{\vz_{k}}-\conj{\vx}
				\end{pmatrix}-\Big(\nabla f(\vz_k, \conj{\vz_k})-\nabla f(\vx, \conj{\vx})\Big)\Big]\\
				&=(\nabla^2f(\vz_k,\conj{\vz_k}))^{-1}\Big[\nabla^2f(\vz_k,\conj{\vz_k})\begin{pmatrix}
					\vz_{k}-\vx\\\conj{\vz_{k}}-\conj{\vx}
				\end{pmatrix}-\int_{0}^{1}\nabla^2 f(\vg_k, \conj{\vg_k})\begin{pmatrix}
					\vz_{k}-\vx\\\conj{\vz_{k}}-\conj{\vx}
				\end{pmatrix}\ud t\Big]\\
				&=(\nabla^2f(\vz_k,\conj{\vz_k}))^{-1}\Big[\int_{0}^{1}\big(\nabla^2 f(\vz_k,\conj{\vz_k})-\nabla^2 f(\vg_k, \conj{\vg_k})\big)\begin{pmatrix}
					\vz_{k}-\vx\\\conj{\vz_{k}}-\conj{\vx}
				\end{pmatrix}\ud t\Big].
			\end{align*}
			Here $ \vg_k = \vz_k+t(\vx-\vz_k),\,t\in[0,1]$. 
			Taking the Euclidean norm on both sides, we obtain
			\begin{align}\label{feq}
				\|\vz_{k+1}-\vx\|&\leq
				\|(\nabla^2f(\vz_k,\conj{\vz_k}))^{-1}\|\cdot\int_{0}^{1}\big\|\nabla^2f(\vz_k,\conj{\vz_k})-\nabla^2f(\vg_k,\conj{\vg_k})\big\|\cdot
				\|\vz_{k}-\vx\|\ud t.
			\end{align}
			By Lemma \ref{Lf}, with high probability we have
			\begin{align}\label{eqf}
				&\int_{0}^{1}\big\|\nabla^2f(\vz_k,\conj{\vz_k})-\nabla^2f(\vg_k,\conj{\vg_k})\big\|\ud t\\\nonumber
				&\leq L_f \int_{0}^{1}
				\|\vz_{k}-\vg_k\|\ud t\\\nonumber
				&=L_f 
				\|\vz_{k}-\vx\|\int_{0}^{1}t\ud t= \frac{L_f}{2}
				\|\vz_{k}-\vx\|,
			\end{align}
			where $ L_f=25(1+\sqrt{\delta})/2 $.
			Moreover, as $ |b|^2\geq 4(2+\delta+2\sqrt{\delta}) $, by Lemma \ref{lemma_mineig}, with high probability,
			\begin{equation}\label{finv}
				\|(\nabla^2f(\vz_k,\conj{\vz_k}))^{-1}\|\leq \frac{4}{|b|^2}.
			\end{equation}
			Inserting (\ref{finv}), (\ref{eqf}) into (\ref{feq}) yields:
			\begin{align*}
				\|\vz_{k+1}-\vx\|&\leq \frac{4}{|b|^2}\cdot\frac{L_f}{2}\|\vz_{k}-\vx\|^2\\
				&=(1+\sqrt{\delta})\frac{25}{|b|^2}\|\vz_{k}-\vx\|^2.
			\end{align*}
			Given $ |b|^2\geq 25\sqrt{\delta}(1+\sqrt{\delta}) $, it follows that 
			$
			\beta=25(1+\sqrt{\delta})/|b|^2\leq 1/\sqrt{\delta}.
			$
			Therefore, we have:
			$$ \|\vz_{k+1}-\vx\|\leq\frac{1}{\sqrt{\delta}} \|\vz_{k}-\vx\|^2\leq \|\vz_{k}-\vx\|\leq \sqrt{\delta}. $$
			This shows that the error decreases quadratically, and the convergence results are  maintained as the iteration progresses. 
		\end{proof}
		\begin{remark}
			If the iteration point $\vz_{k+1}$ is generated by Gauss-Newton iteration (\ref{iteration_gn1}), we can also obtain the same convergece result.
		\end{remark}
		Based on this theorem, we now state the overall convergence properties of Algorithm \ref{Alg}, which are summarized in the following result.
		\begin{theorem}
			Let $ \vx\in\C^n $ be an arbitrary vector with $ \|\vx\|=1 $, and let $ \va_j,\,j=1,\ldots,m_r $ be distributed according to either the Gaussian model or the admissible CDPs model. Let $ \vb=(b,b,\ldots,b)^\top $ be a fixed vector with $ |b|^2\geq 50$. Given a target accuracy $ \epsilon>0 $, we have the following results. 
			
			For the Gaussian model, if the number of measurements satisfies
			$ m_r\geq C\log\log\frac{1}{\epsilon}\cdot n\log n $, for the admissible CDPs model, if the number of patterns satisfies $ L_r\geq C\log\log\frac{1}{\epsilon}\cdot \log^3 n $, where  $ C $ is a sufficiently large constant, then with high probability at least $ 1-\tilde{c}/n^2 $ (Gaussian model) or $ 1-\tilde{c}\log^3n/n^3 $ (admissible CDPs model), where $\tilde{c}>0$), Algorithm \ref{Alg} generates an estimate $ \vz_T $ such that
			\[
			\|\vz_T-\vx\|\leq \epsilon.
			\]
		\end{theorem}
		\begin{proof}		
			In Algorithm \ref{Alg}, we set $ T= c\log\log\frac{1}{\epsilon} $ and define $m=m_r/T$ and $ L=L_r/T$. Then, for each measurement set $ A^{(j)}$ in Algorithm \ref{Alg}, the number of measurements satisfies $ m\geq C_1 n\log n $ for the Gaussian model and $ L\geq C_1\log^3 n $ for the admissible CDPs model, where $ C_1 $ is a constant depending on $ C $ and $c$. Starting from the initial point $ \vz_0=\boldsymbol{0}$, we have $ \|\vz_0-\vx\|\leq 1=\sqrt{\delta} $, and let $ |b|^2\geq 50=\max\{25\sqrt{\delta}(1+\sqrt{\delta}), 4(2+\delta+2\sqrt{\delta})\} $. 
			
			By iteratively applying Theorem \ref{th:maintheorem} $ T $ times, we obtain
			\begin{align*}
				\|\vz_T-\vx\|&\leq\beta\|\vz_{T-1}-\vx\|^2\\&\leq \beta^{2^T-1}\|\vz_0-\vx\|^{2^T}\\
				&\leq \epsilon,
			\end{align*}
			which holds with probability at least $ 1-\tilde{c}/n^2 $ (Gaussian model) or $ 1-\tilde{c}L/n^3 $ (admissible CDPs model).
		\end{proof}
		
		This work highlights the key distinctions between affine phase retrieval and the classical phase retrieval problem. In our main theorem, we focus on the noiseless case to rigorously establish the global quadratic convergence of Algorithm \ref{Alg}. While this represents an idealized scenario, it provides a foundation for understanding the algorithm's behavior in practical settings where measurements are subject to noise. When noise is bounded, Lemma \ref{mineig} (strong convexity) remains applicable, ensuring the algorithm's robustness.  
		
		Additionally, the choice of $b$ significantly influences the algorithm's accuracy in the presence of noise, as evidenced by our numerical experiments (Subsection \ref{impactb}). A detailed theoretical analysis of the relationship between $b$ and noise tolerance is left for future work.
		
		
		\section{Numerical Experiments}\label{sec4}
		In this section, we present numerical experiments to validate the theoretical results and demonstrate the effectiveness of the proposed algorithms. All experiments are conducted in the complex domain with $ n=512 $ and measurements are generated according to both the Gaussian model and the admissible CDPs model. Unless otherwise specified, in the following numerical experiments, the ground truth $\vx $ is randomly generated from a complex unit Gaussian distribution $\mathcal{CN}(0, I_n)$, and $\vb$ is a vector with identical entries $50\|\vx\|^2$. For all iterative algorithms, the initial point is set to the zero vector, and the relative error is defined as $\|\vx-\vz\|/\|\vx\|$.  The experiments are divided into three parts:
		\begin{itemize}
			\item Effect of Resampling: Compare Newton and Gauss-Newton methods with and without resampling.
			\item Impact of $b$ on Noise Robustness: Analyze the Newton method's performance under noise with varying $b$.
			\item Comparison with First-Order Methods:  Compare Newton method, Gauss-Newton method with Wirtinger Flow (WF) \cite{WF} and Perturbed Amplitude Flow (PAF) \cite{paf} algorithms.
		\end{itemize}
		
		\subsection{Effect of Resampling}
		In the theoretical analysis of the algorithm, it is essential that the current iteration point $\vz_k$ remains independent of the measurements $\va_j$. This independence is crucial for deriving the relevant concentration inequalities, which form the basis for establishing convergence guarantees. To evaluate the practical impact of resampling, we conduct numerical experiments comparing the performance of Newton, Gauss-Newton, Newton with resampling, and Gauss-Newton with resampling methods under two scenarios:
		\begin{itemize}
			\item The ground truth signal is randomly generated as $\vx \,{\sim}\, \mathcal{CN}(0, I_n)$.
			\item The ground truth signal is set to be the first measurement vector, i.e., $ \vx = \va_1$.
		\end{itemize}
		
		We use Gaussian model (\ref{gaussian}) with $n=512$ and $m=4n$. The total number of iterations is set to $T = 6$. For the resampling versions, the remaining $m(T-1)$ measurements are randomly generated in the same manner, i.e., $\va_j \overset{i.i.d.}{\sim}\mathcal{CN}(0, I_n/2)$. 
		\begin{figure}[htbp]
			\begin{center}
				\subfigure[]{
					\includegraphics[width=0.48\textwidth]{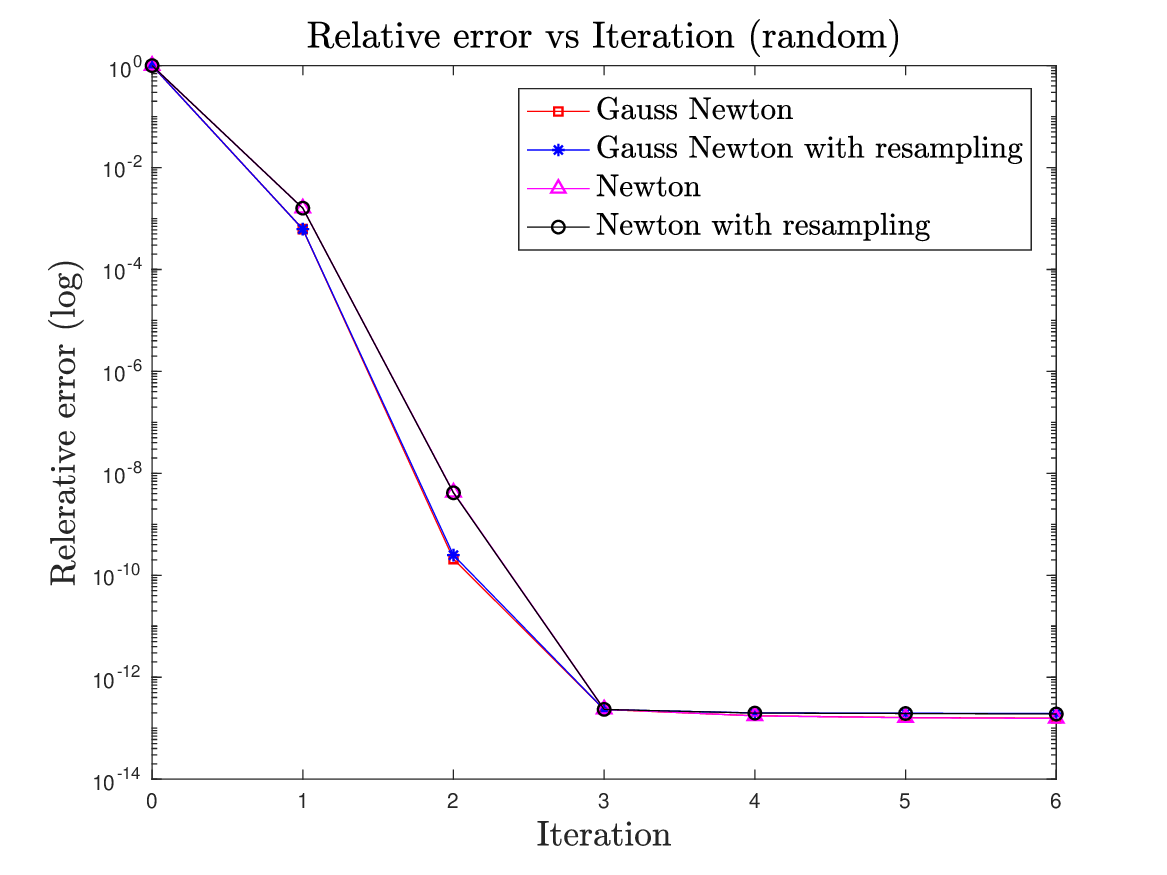}}
				\subfigure[]{
					\includegraphics[width=0.48\textwidth]{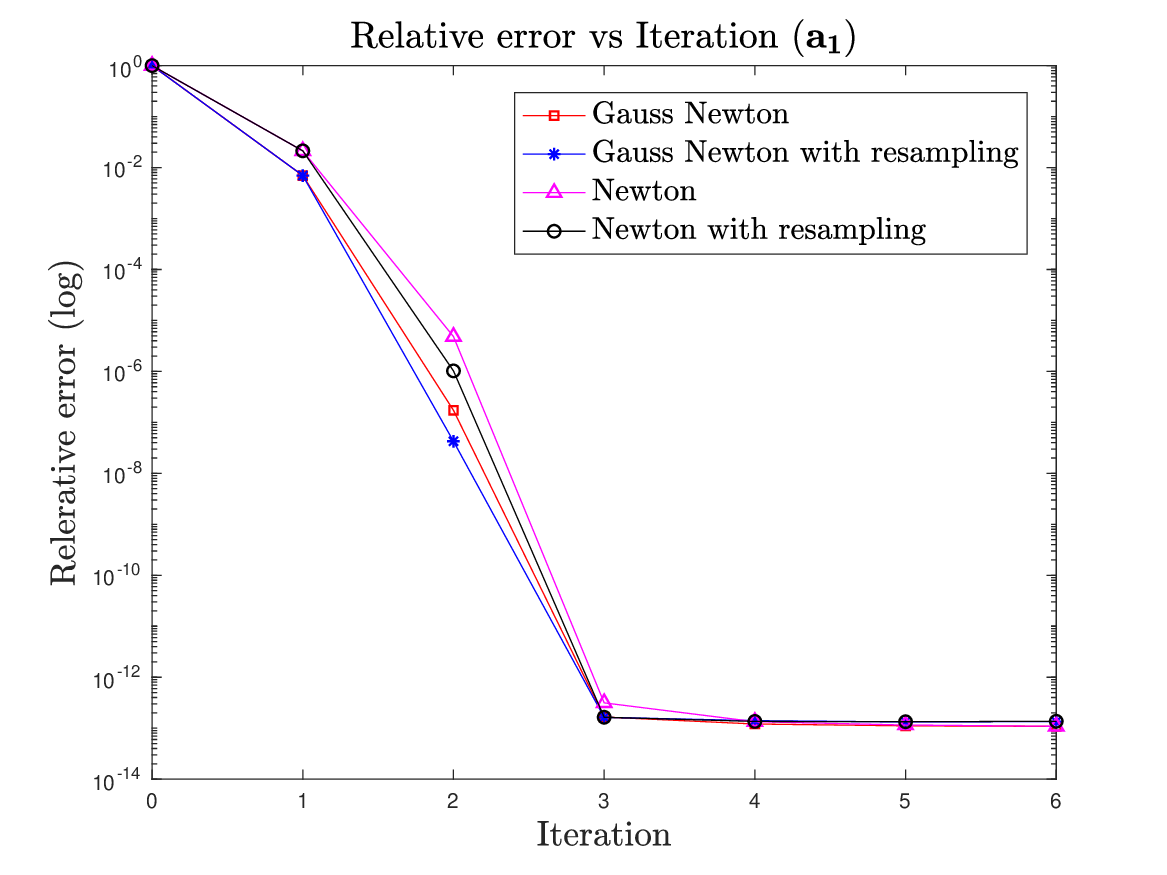}}
				\caption{Convergence of Newton and Gauss-Newton methods with/without resampling: record the relative error of each iteration step. (a) $\vx$ is randomly generated and (b) $\vx=\va_1$.}\label{1resample}
			\end{center}
		\end{figure}
		The results are presented in Figure \ref{1resample}. When both the observations and the true signal follow an independent Gaussian distribution, the results obtained with and without resampling are nearly indistinguishable. This suggests that resampling does not significantly impact performance in this scenario. While when the true signal coincides with one of the measurement vectors, the resampling versions exhibit slightly improved performance. However, the overall difference between resampling and non-resampling remains minimal, likely due to the small scale and simplicity of the problem.
		
		We hypothesize that resampling may play a more significant role in larger-scale or more complex scenarios, where the independence of measurements and iteration points becomes more critical. However, based on the current experimental results, we conclude that resampling does not offer a substantial advantage in the tested cases. Consequently, we omit the resampling versions from subsequent comparisons in the following experiments.
		
		\subsection{Impact of $b$ on Noise Robustness}\label{impactb}
		Based on the analysis of Lemma \ref{lemma_mineig}, we observe that the strong convexity of the objective function in the presence of noise is influenced by the magnitude of $b$. To investigate this relationship, we analyze the robustness of the Newton method to noise by fixing the noise level and varying $b$. We focus solely on the Newton method here, as experiments indicated similar final accuracy across all other algorithms under noise.
		\begin{figure}[!h]
			\begin{center}
				\subfigure[]{
					\includegraphics[width=0.48\textwidth]{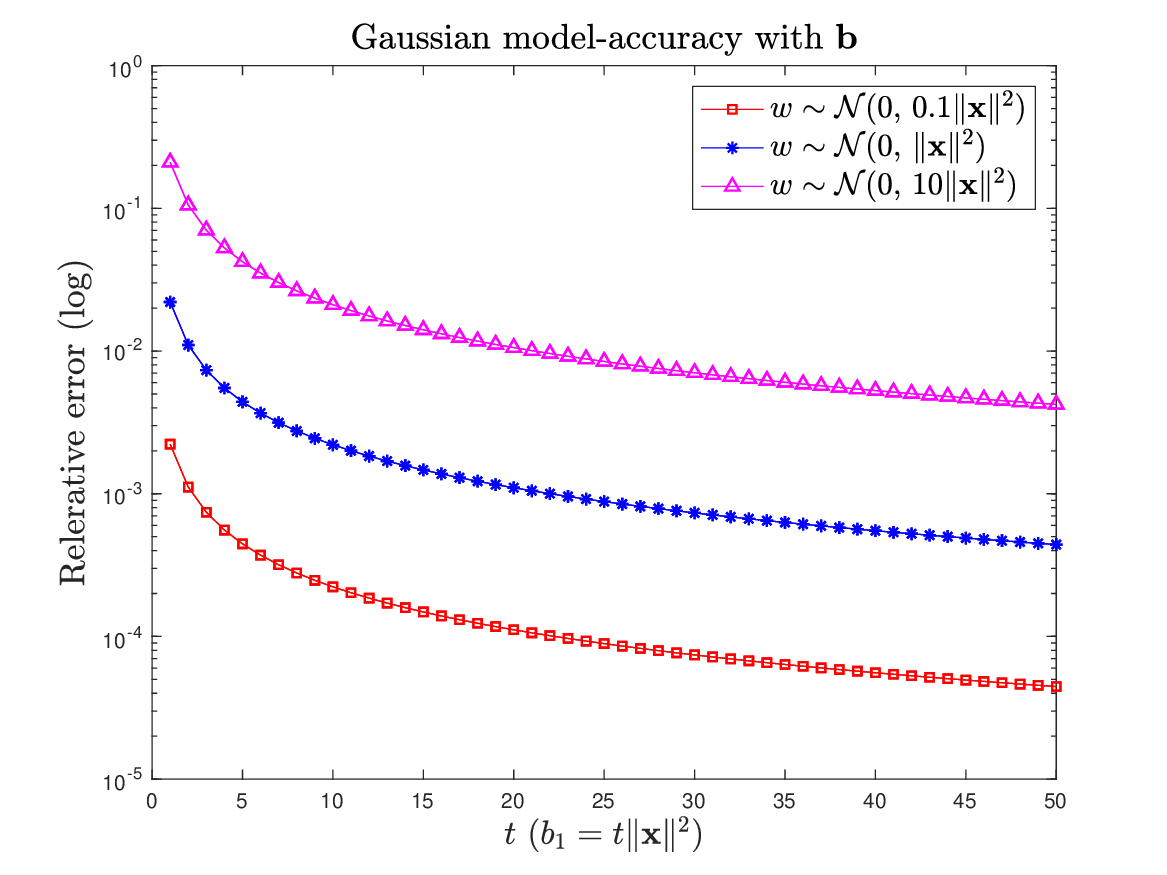}}
				\subfigure[]{
					\includegraphics[width=0.48\textwidth]{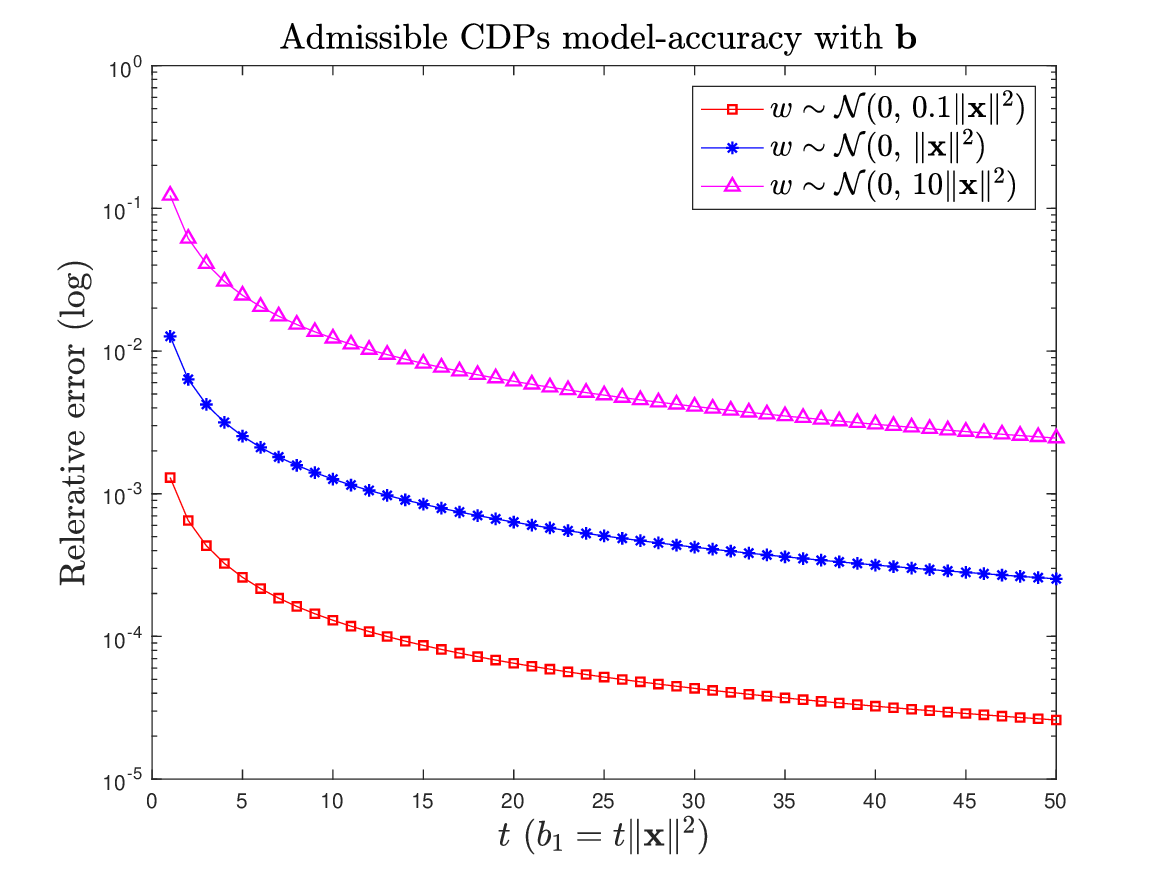}}
				\caption{Accuracy under noise: record the relative error of Newton method  for different values of $b=t\|\vx\|^2$.  (a) Gaussian model  with $ m=4n $. (b) Admissible CDPs model with $L=8$. }\label{b_noise}
			\end{center}
		\end{figure}
		In this experiment, we introduce additive noise $w_j\overset{i.i.d.}{\sim}\mathcal{N}(0,\sigma^2\|\vx\|^2)$ with $\sigma^2 \in\{0.1, 1, 10\}$. The $b$ is set as $ b = t\|\vx\|^2$, where $t$ ranges from $1 $ to $50$. For each $b$, we record the final error after 6 iterations of the Newton method and present the results in \ref{b_noise}. The results indicate that larger values of $b$ contribute to improved accuracy. For example, in Gaussian model when $\sigma^2 = 1$, the relative error decreases from $0.02$ to $4.4\times 10^{-4}$ when $b$ increases from $\|\vx\|^2$ to $50\|\vx\|^2$.
		These findings align with Lemma \ref{lemma_mineig}, which suggests that larger $b$ strengthens the strong convexity of the objective function, thereby improving noise tolerance. A detailed theoretical analysis of the relationship between $b$, noise tolerance, and convergence stability will be addressed in future work.
		
		\subsection{Comparison with First-Order Methods}
		We first adapt the Wirtinger Flow (WF) and Perturbed Amplitude Flow (PAF) algorithms for affine phase retrieval. Using the same symbol: 
		\begin{itemize}
			\item The WF method for affine phase retrieval:
			\begin{align}\label{iteration_wf}
				\vz_{k+1} &= \vz_{k} - \alpha_k \cdot \nabla f(\vz_k,\conj{\vz_k})(1:n)\\\nonumber
				&=\vz_{k} - \alpha_k \cdot \frac{1}{m}\sum_{j=1}^{m}
				\Big(\abs{\va_j^*\vz_k+b_j}^2-y_j\Big)(\va_j^*\vz_k+b_j)\va_j,
			\end{align}
			where the step size is defined as $ \alpha_k = \min\{1-\exp(-k/330), 0.2\}\cdot\frac{m}{\|\vb\|^2}$. 
			\item The PAF method for affine phase retrieval:
			\begin{align}\label{iteration_paf}
				\vz_{k+1}
				&=\vz_{k} - \mu_k \cdot \frac{1}{m}\sum_{j=1}^{m}
				\bigg(1-\frac{\sqrt{y_j+p_j^2}}{\sqrt{\abs{\va_j^*\vz_k+b_j}^2+p_j^2}}\bigg)(\va_j^*\vz_k+b_j)\va_j.
			\end{align}
			Here, the perturbed term is set as $p_j=\sqrt{y_j}$, and the step size is fixed at $\mu_k = 1.2$.
		\end{itemize}
		
		\subsubsection{Convergence Behavior}
		
		We analyze the convergence rates of Newton, Gauss-Newton, WF, and PAF under both noiseless and noisy measurements. For the Gaussian model, we set $m=4n$, while for the admissible CDPs model, we set $L=8$. In the noisy case, the noise is modeled as $w_j\overset{i.i.d.}{\sim}\mathcal{N}(0,0.1\|\vx\|^2) $, resulting in the $j$-th measurement being $y_j = |\va_j^*\vx+b_j|^2+w_j$. 
		
		\begin{figure}[!h]
			\begin{center}
				\subfigure[]{
					\includegraphics[width=0.48\textwidth]{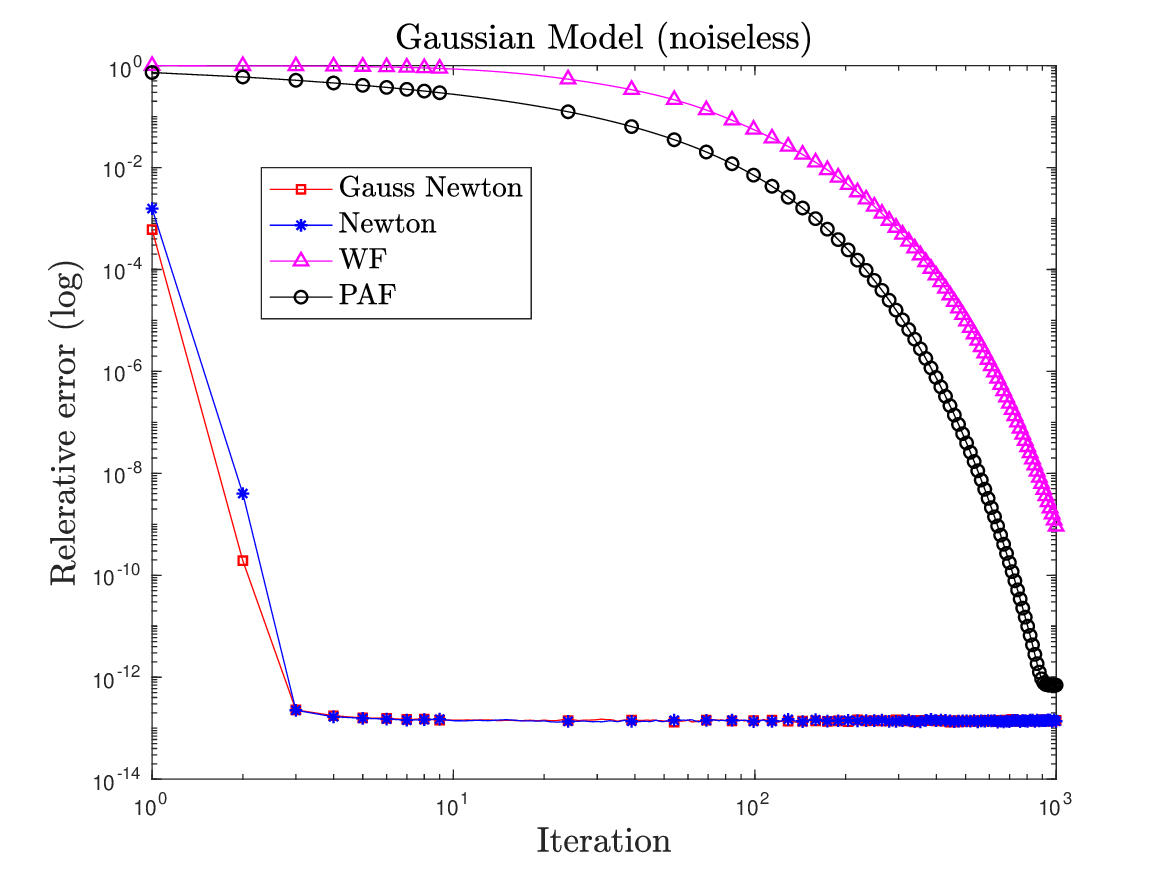}}
				\subfigure[]{
					\includegraphics[width=0.48\textwidth]{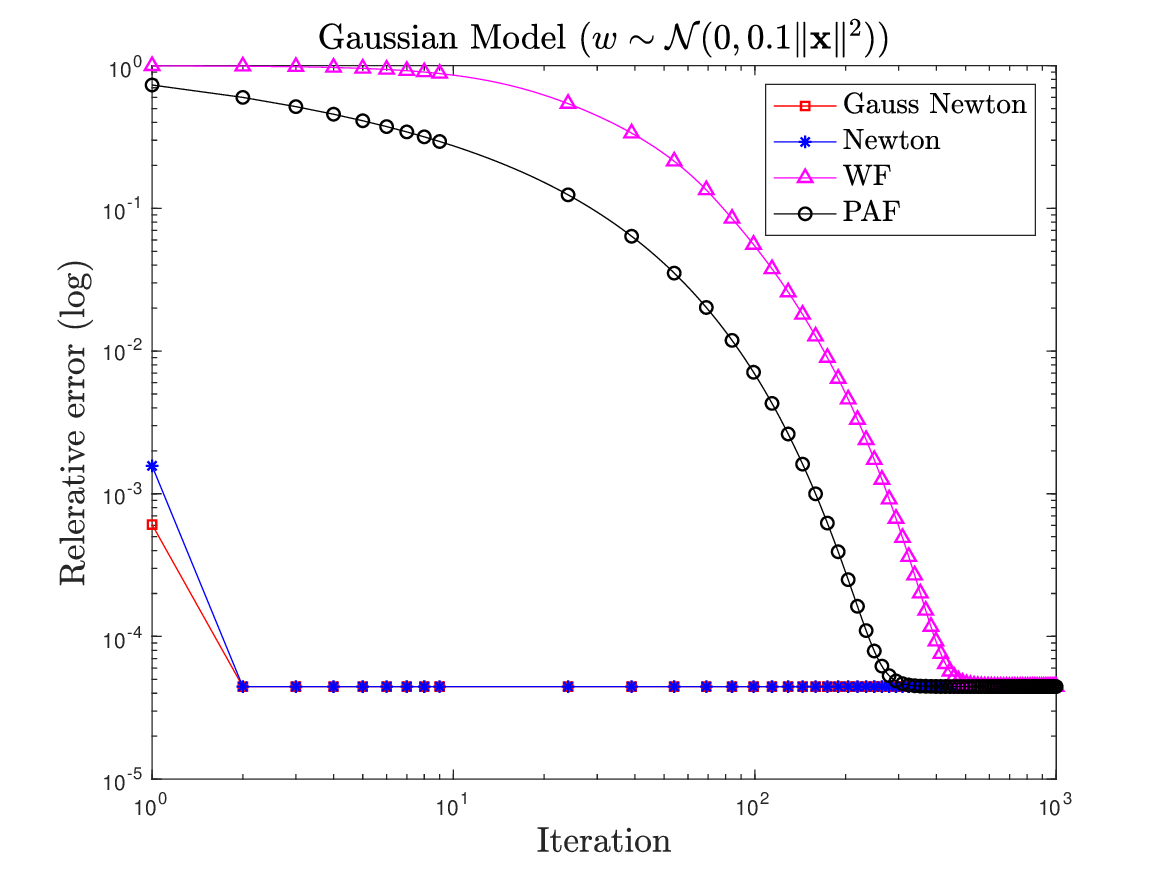}}
				\subfigure[]{
					\includegraphics[width=0.48\textwidth]{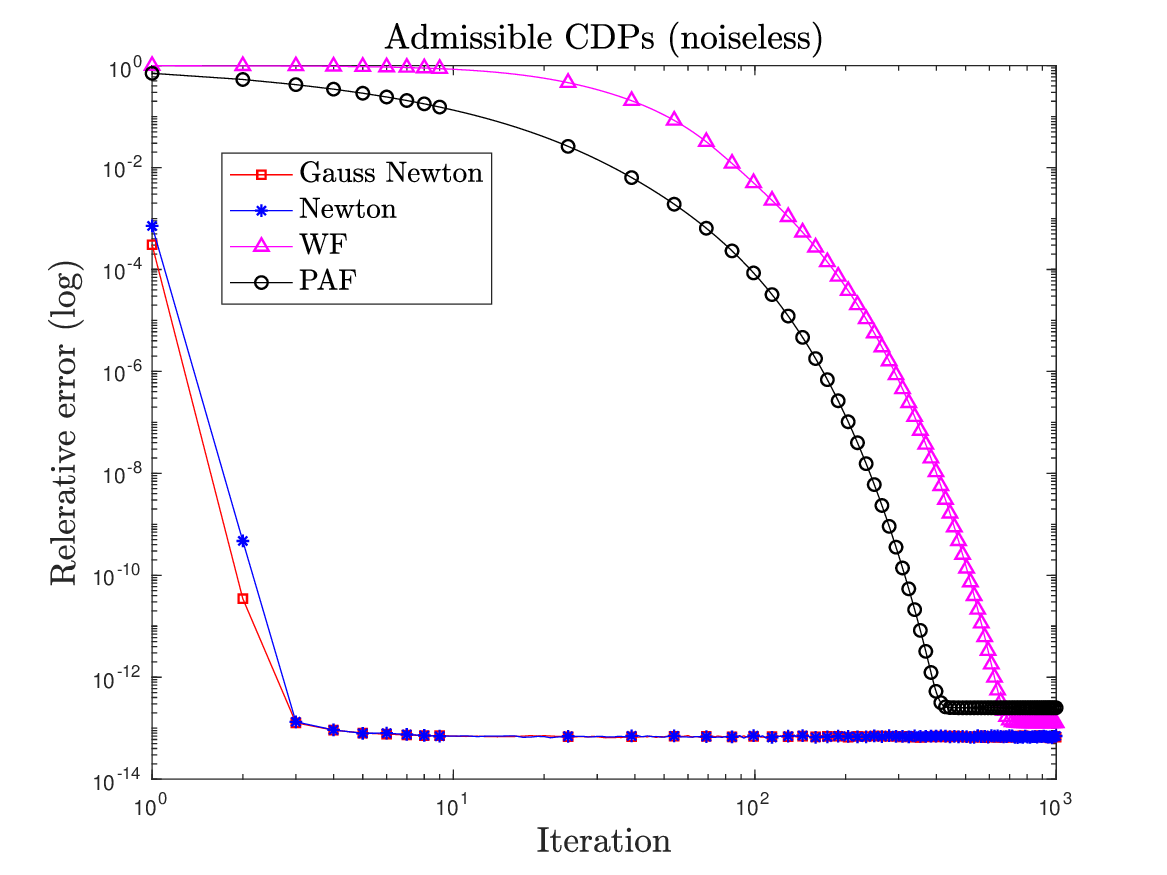}}
				\subfigure[]{
					\includegraphics[width=0.48\textwidth]{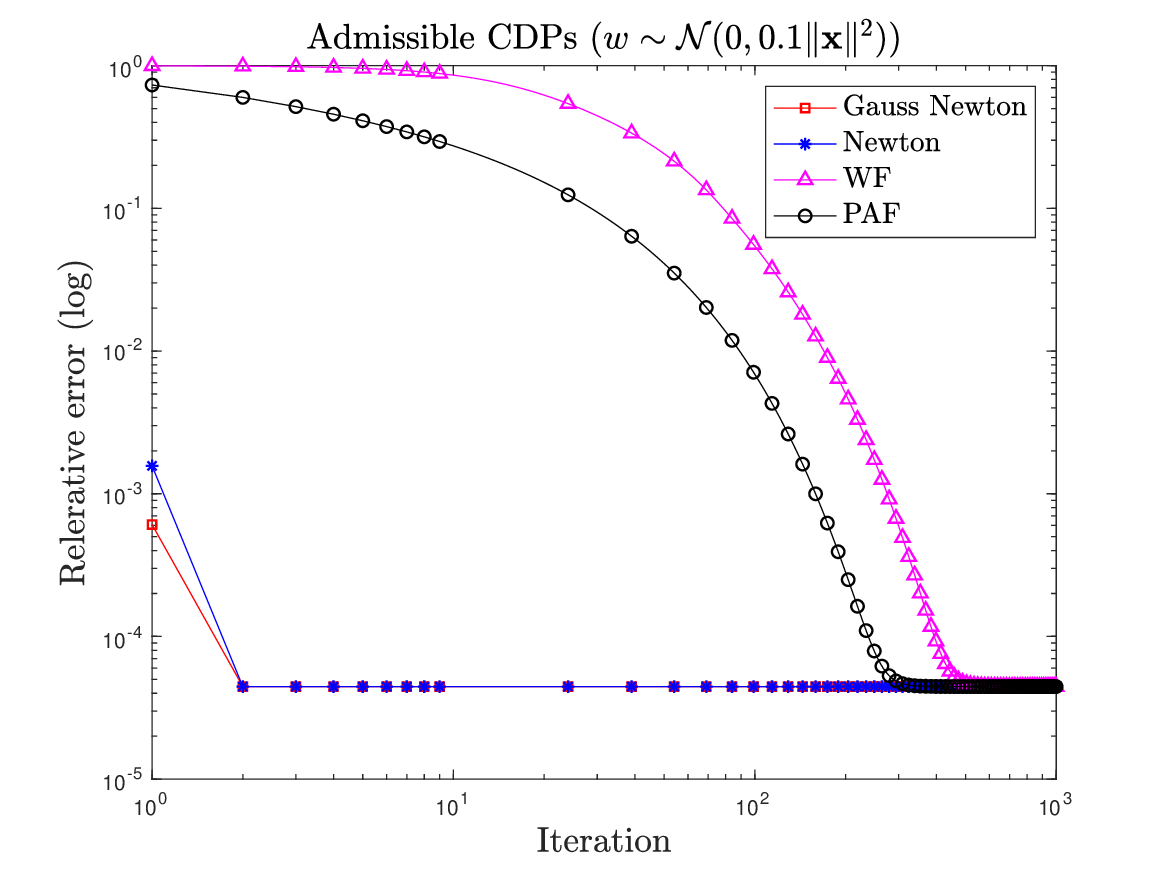}}
				\caption{Convergence experiments: relative error versus iteration count.  Here $ n = 512 $ and $ m=4n $ for the Gaussian model: (a) noiseless case, (b) Gaussian noise case. For the  admissible CDPs model with $L=8$: (c) noiseless case, (d) Gaussian noise case. }\label{converg_c}
			\end{center}
		\end{figure}
		As shown in Figure \ref{converg_c}, Newton and Gauss-Newton exhibit faster convergence due to their quadratic convergence rates. Even in the presence of noise, the second-order methods maintain superior performance, achieving similar final accuracy compared to first-order methods.
		\subsubsection{Computational Time}
		To evaluate computational efficiency, we record the time and iterations required for each method to achieve the relative error below $10^{-8}$. The results,  averaged over 50 independent trials, are presented in Table \ref{cputime_G} (Gaussian model) and Table \ref{cputime_C} (admissible CDPs model).
		\begin{table}[htb]		
			\begin{center}
				\begin{tabular}{|c|c|c|c|c|}
					\hline
					& Gauss-Newton & Newton & WF & PAF  \\\hline
					Iter & 2 & 2 & 890 & 567 \\\hline		
					Error($10^{-8}$) &  0.0213 &  0.4270   &  0.9905  & 0.9866\\\hline
					Time(s) &   0.4565  & 0.4529  &  1.3321   & 0.8459   \\\hline
				\end{tabular}
			\end{center}
			\caption{Iteration count and runtime for the Gaussian model}
			\label{cputime_G}
		\end{table}
		\begin{table}[!htb]		
			\begin{center}
				\begin{tabular}{|c|c|c|c|c|}
					\hline
					& Gauss-Newton & Newton & WF & PAF  \\\hline
					Iter & 2 & 2 & 475 & 291 \\\hline		
					Error($10^{-8}$) &  0.0045  &  0.0694    &  0.9806 &  0.9714\\\hline
					Time(s) &    0.6155  &  0.6286   &  1.3742   & 0.8667   \\\hline
				\end{tabular}
			\end{center}
			\caption{Iteration count and runtime for the admissible CDPs model}
			\label{cputime_C} 
		\end{table}
		The Gauss-Newton and Newton methods require significantly fewer iterations due to their rapid convergence. However, each iteration involves computationally expensive matrix inversions. Despite this, the overall runtime remains competitive with first-order methods, as the reduced iteration count offsets the higher per-iteration cost.
		
		\subsubsection{Success Rate}
		
		Theoretical analysis suggests that accurate recovery in the Gaussian model requires $ \mathcal{O}( n\log n )$ measurements, while the admissible CDPs model requires $L= \mathcal{O}(\log^3n) $.
		To validate these theoretical bounds, we conducted experiments by varying the number of measurements and recording the success rate for each $ m/n $ (Gaussian model) or $L$ (admissible CDPs model). A trial was considered successful if the relative error fell below $ 10^{-5} $ within a maximum of 10 iterations for Newton and Gauss-Newton and 1500 iterations for WF and PAF. For each $ m/n $ or $ L $, we performed 50 independent trials to calculate the success rate.
		\begin{figure}[htb]
			\begin{center}
				\subfigure[]{
					\includegraphics[width=0.48\textwidth]{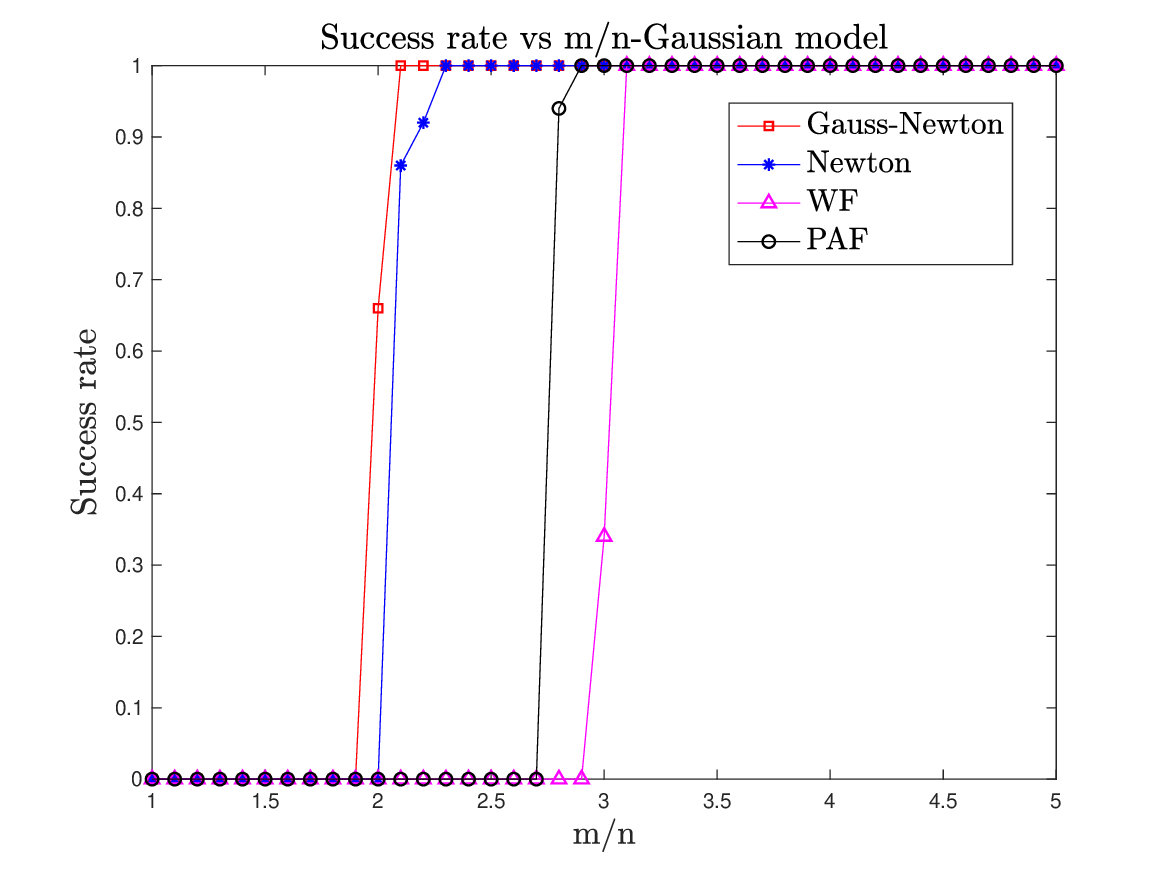}}
				\subfigure[]{
					\includegraphics[width=0.48\textwidth]{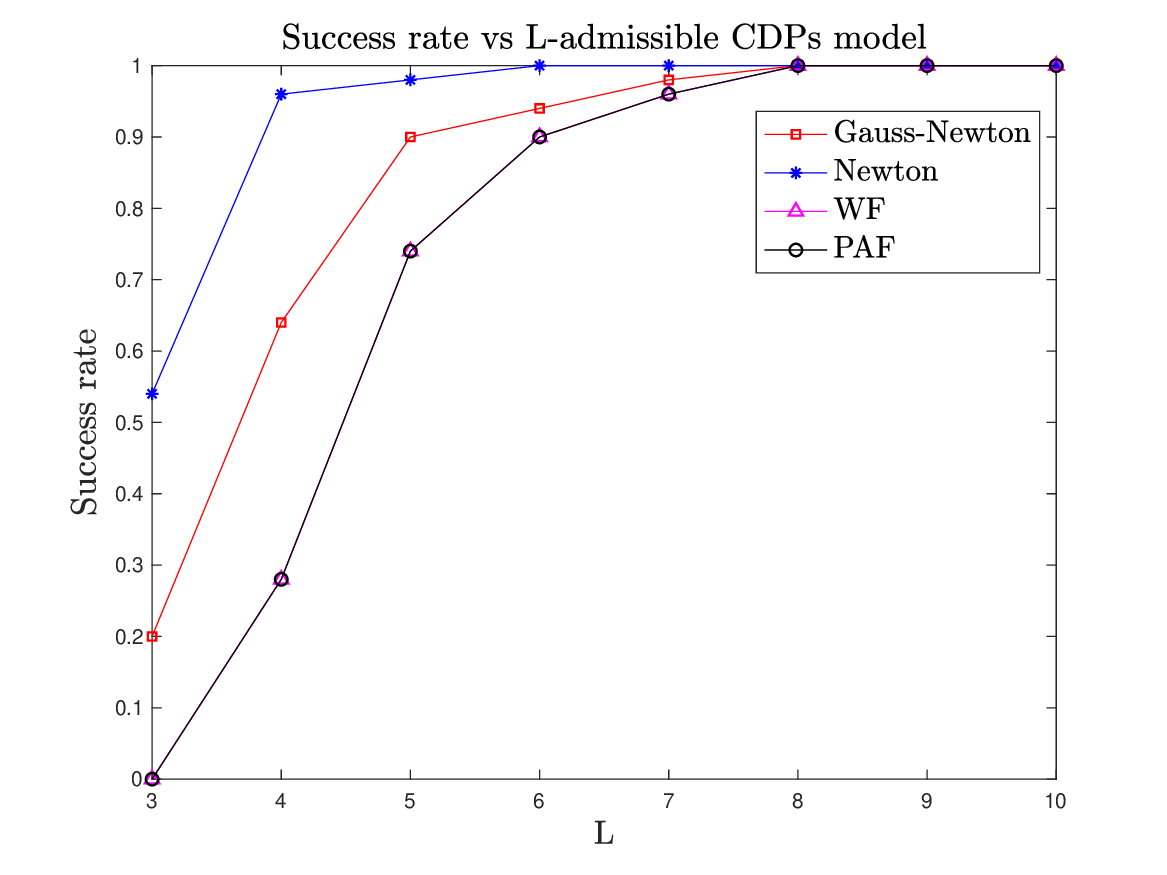}}
				\caption{Success rate versus measurement count. (a) Gaussian model with $ m/n=1:0.1:5 $.  (b) Admissible CDPs model with $ L\in[3,10]$.}\label{success}
			\end{center}
		\end{figure}
		
		In the Gaussian model, $ m/n $ was varied from $1 $ to $ 5$ with a step size of $ 0.1 $, while in the admissible CDPs model, $ L $ was varied from $3$ to $10$ with a step size of $ 1 $. The results, displayed in Figure \ref{success}, demonstrate that the practical number of measurements required for successful recovery is significantly lower than the theoretical bounds. This indicates that the algorithm is robust and efficient, even with a reduced number of observations.
		
		
		\section{Conclusion}
		In this paper, we reformulate the affine phase retrieval problem as  a nonlinear least-squares optimization problem and develop second-order algorithms based on Newton and Gauss-Newton methods. 
		Unlike traditional phase retrieval that suffers from non-convexity and sensitivity to initialization, we prove that the affine formulation exhibits strong convexity under specific signal-dependent conditions (e.g., when the affine shift $|b|^2$ exceeds a problem-dependent threshold). This property enables the direct application of second-order methods without requiring careful initialization, a significant advantage over conventional phase retrieval.
		Specifically, we prove that with noiseless Gaussian or admissible CDPs measurements, the Newton method with resampling achieves quadratic convergence to the exact solution. 
		
		The analytical techniques developed in this study are not limited to the Newton and Gauss-Newton method but can also be extended to derive convergence guarantees for other second-order algorithms in the context of affine phase retrieval.
		\section{Acknowledgments}
		The author would like to thank Prof. Zhiqiang Xu for his helpful discussions. This work is supported by NSFC grant \#12001297 .
		\section{Appendix}
		\begin{lemma}\label{affine_lemma}[Theorem 3.1 in \cite{affine_g}]
			Let $ A = (\va_1,\ldots,\va_m)^\top\in\C^{m\times n} $ and $ \vb=(b_1,\ldots,b_m)^\top\in\C^m $. Let $ \MM_{A,\vb}^2: \C^n \to \R^m $ be a map defined by 
			\[
			\MM_{A,\vb}^2:=\big(|\va_1^*\vx+b_1|^2,\ldots,|\va_m^*\vx+b_m|^2\big) 
			\]
			Then, $ (A,\vb) $ is affine phase retrievable for $ \C^n $ if and only if the real Jacobian of map $ \MM_{A,\vb}^2 $ has rank $ 2n $ for all $ \vx\in\C^n $.
		\end{lemma}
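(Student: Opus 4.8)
The plan is to recast the claim as an equivalence between the \emph{injectivity} of a quadratic map and the everywhere-nondegeneracy of its differential. Write $F:=\MM_{A,\vb}$, regarded as a map $\C^n\cong\R^{2n}\to\R^m$ with $F(\vx)=(\,|\va_j^*\vx+b_j|^2\,)_{j=1}^m$; by definition, ``$(A,\vb)$ is affine phase retrievable for $\C^n$'' is precisely the statement that $F$ is injective. First I would record the real differential of $F$ at a point $\vx$ in a direction $\vu\in\C^n$,
\[
DF_{\vx}(\vu)=\Big(\,2\Re\big(\overline{(\va_j^*\vx+b_j)}\,\va_j^*\vu\big)\,\Big)_{j=1}^m ,
\]
so that ``the real Jacobian of $F$ has rank $2n$'' means exactly $\ker DF_{\vx}=\{\vO\}$. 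The only computation I really need is that the restriction of each coordinate $F_j$ to an affine line is a genuine real polynomial of degree at most two:
\[
F_j(\vx+t\vu)=F_j(\vx)+2t\,\Re\big(\overline{(\va_j^*\vx+b_j)}\,\va_j^*\vu\big)+t^2|\va_j^*\vu|^2 .
\]

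For the implication ``affine phase retrievable $\Rightarrow$ Jacobian full rank'' I would argue by contraposition: if $DF_{\vx}(\vu)=\vO$ for some $\vu\neq\vO$, then the term linear in $t$ above vanishes for every $j$, hence $F(\vx+t\vu)=F(\vx-t\vu)$ for all $t\in\R$; for $t\neq 0$ the arguments $\vx+t\vu$ and $\vx-t\vu$ are distinct, contradicting injectivity. Thus $\ker DF_{\vx}=\{\vO\}$ at every $\vx$.

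For the reverse implication, suppose $F(\vx_1)=F(\vx_2)$ with $\vx_1\neq\vx_2$, and set $\vd=\vx_1-\vx_2$, $\vx_m=\tfrac12(\vx_1+\vx_2)$. Writing $g_j(t):=F_j(\vx_2+t\vd)=\alpha_j t^2+\beta_j t+\gamma_j$ (degree at most two by the displayed expansion), the equality $g_j(0)=g_j(1)$ forces $\alpha_j+\beta_j=0$, so $g_j'(t)=\alpha_j(2t-1)$ vanishes at $t=\tfrac12$ — and crucially at the \emph{same} $t$ for all $j$ at once. By the chain rule this reads $DF_{\vx_m}(\vd)=\vO$ with $\vd\neq\vO$, so the Jacobian at $\vx_m$ has rank $<2n$, contradicting the hypothesis.

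The proof is short; the one conceptual point — and the step where I would be most careful — is that here a purely pointwise condition on the differential controls the \emph{global} property of injectivity, which fails for generic smooth maps and works only because every coordinate of $F$ is a real polynomial of degree two, so its restriction to a line is a parabola and ``equal values at the two endpoints'' rigidly forces a common critical point at the midpoint. Everything else is the routine bookkeeping of translating between Wirtinger derivatives and the $m\times 2n$ real Jacobian; in contrast to Lemma~\ref{lemma_mineig}, no probabilistic or concentration input is needed.
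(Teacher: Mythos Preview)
Your argument is correct. The two directions are exactly the right ones: for ``injective $\Rightarrow$ full rank everywhere'' you exploit that a kernel vector $\vu$ makes $t\mapsto F(\vx+t\vu)$ even in $t$, and for the converse you use the parabola/midpoint observation that $F_j(\vx_2)=F_j(\vx_1)$ forces a common critical point at $\tfrac12(\vx_1+\vx_2)$ for every $j$. One cosmetic slip: you write $F:=\MM_{A,\vb}$ but then work with the squared map $(\,|\va_j^*\vx+b_j|^2\,)_j$, which is $\MM_{A,\vb}^2$; the content is unaffected.

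As for comparison with the paper: there is nothing to compare. The paper does not prove Lemma~\ref{affine_lemma}; it is quoted verbatim as Theorem~3.1 of \cite{affine_g} and used only once, in Section~1.1, to motivate why second-order methods are viable in the complex affine setting. Your self-contained proof is therefore strictly more than what the present paper supplies, and it matches the standard argument one would expect in \cite{affine_g}.
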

		\begin{lemma}\label{WF_lemma}[Lemma 7.4 in \cite{WF}]
			Let $\vx \in \C^n$ be a signal with $\|\vx\| = 1$, which is independent of the measurement samples $\va_j \in \C^n$, for $j = 1, 2, \ldots, m$. Assume the vectors $\va_j$ are drawn according to either the Gaussian model or the admissible CDPs model. Define the matrix
			\[
			W := \frac{1}{m} \sum_{j=1}^{m}
			\begin{pmatrix}
				|\va_j^* \vx|^2 \va_j \va_j^* & (\va_j^* \vx)^2 \va_j \va_j^\top \vspace{3mm} \\
				(\vx^* \va_j)^2 \conj{\va_j} \va_j^* & |\va_j^* \vx|^2 \conj{\va_j} \va_j^\top
			\end{pmatrix}.
			\]
			For any $\epsilon > 0$, if the number of samples satisfies $m \geq C_\epsilon n \log n$ in the Gaussian model, and the number of patterns satisfies $L \geq C_\epsilon \log^3 n$ in the admissible CDPs model, with a sufficiently large constant $C_\epsilon$, 
			\[
			\|W - \E[W]\| \leq \epsilon \|\E[W]\|,
			\]
			holds with probability at least $1 - 10\exp(-\gamma_\epsilon n) - 8/n^2$ for the Gaussian model, and $1 - (2L + 1)/n^3$ for the admissible CDPs model, where $\gamma_\epsilon > 0$ is a constant.
		\end{lemma}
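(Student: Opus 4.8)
The plan is to recognize $S$ as a normalized sum of independent rank-one positive-semidefinite matrices and then to reduce the spectral bound to a pointwise concentration statement over a net, handling the Gaussian and CDP ensembles separately only at the concentration stage. First I would set $w_j:=\begin{pmatrix}(\va_j^*\vx)\va_j \\ (\vx^*\va_j)\conj{\va_j}\end{pmatrix}\in\C^{2n}$ and verify by direct block multiplication that the four blocks of each summand coincide with $w_jw_j^*$, so that $S=\frac1m\sum_{j=1}^m w_jw_j^*$. Since $\|\vx\|=1$, a direct second-moment computation shows that $\E S$ has operator norm bounded by a fixed constant, $\|\E S\|=\Theta(1)$, so it suffices to establish the \emph{additive} deviation $\|S-\E S\|\le\delta\|\E S\|$. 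I would then fix a $\tfrac14$-net $\NN$ of the unit sphere $\BS^{2n-1}\subset\C^{2n}$, whose cardinality is exponential in $n$, and use the Hermitian net bound $\|S-\E S\|\le 2\max_{u\in\NN}\bigl|u^*(S-\E S)u\bigr|$. For a fixed unit vector $u=(p^\top,q^\top)^\top$ with $p,q\in\C^n$ the pointwise form is $u^*w_jw_j^*u=|w_j^*u|^2=\bigl|\conj{(\va_j^*\vx)}\,\va_j^*p+(\va_j^*\vx)\,\va_j^\top q\bigr|^2$, a degree-four polynomial in the entries of $\va_j$.

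For the Gaussian model the scalar $w_j^*u$ is a sum of products of two centered complex-Gaussian linear forms, hence sub-exponential, so each summand $|w_j^*u|^2$ is a nonnegative variable of mean $\Theta(1)$ with a heavy (squared-sub-exponential) tail. The key step is a truncation of the pointwise sum $\frac1m\sum_j|w_j^*u|^2$ at a threshold growing only slowly in $n$: Bernstein's inequality applied to the truncated terms yields a deviation below $\delta\|\E S\|/2$ with failure probability $\exp(-\gamma_\delta n)$, while the discarded tail is controlled on the high-probability event $\max_j\|\va_j\|^2\lesssim n$, which holds with probability $1-O(1/n^2)$ by a union bound over the $m$ samples. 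Taking a union bound of the pointwise estimate over the exponentially many net points is precisely what inflates the sampling requirement to $m\ge C_\delta n\log n$ and produces the advertised probability $1-10\exp(-\gamma_\delta n)-8/n^2$.

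For the admissible CDPs model the only randomness is carried by the i.i.d. diagonal modulations $d_l[t]$, the Fourier vectors $\vf_k$ being deterministic. Here I would first use the admissibility moments $\E d=0$, $\E d^2=0$, $\E|d|^2=1$, $\E|d|^4=2$ together with $|d|\le M$ to confirm that $\E S$ has the same closed form as in the Gaussian case; the boundedness of $d$ then removes the truncation difficulty entirely. The deviation $\|S-\E S\|\le\delta\|\E S\|$ is obtained by grouping the $nL$ summands into the $L$ independent patterns and applying the structured matrix-concentration machinery for CDP ensembles; the Fourier geometry forces the weaker pattern count $L\ge C_\delta\log^3 n$ and yields the probability $1-(2L+1)/n^3$.

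I expect the main obstacle to be the Gaussian truncation step and its interaction with the net: the per-sample matrices $w_jw_j^*$ have operator norm $2|\va_j^*\vx|^2\|\va_j\|^2$ of order $n$, so no uniformly bounded matrix-Bernstein estimate is available, and one must balance the truncation level, the Bernstein variance proxy, and the exponent of the union bound over the net all at once. This balance is exactly where the logarithmic factor in $m\ge C_\delta n\log n$ is born, and securing the sharp rate rather than a spurious higher power of $\log n$ is the delicate part; the detailed accounting is the content of Lemma~7.4 in \cite{WF}, which we invoke.
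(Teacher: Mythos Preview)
The paper does not prove this lemma at all: it is stated in the appendix purely as a citation of Lemma~7.4 in \cite{WF}, with no accompanying argument. Your proposal is therefore not comparable to any proof in the present paper; what you have written is a reasonable outline of the standard route (the rank-one rewriting $S=\tfrac1m\sum_j w_jw_j^*$, a net reduction, truncation plus Bernstein for the Gaussian ensemble, and bounded matrix concentration for CDPs), and you yourself close by invoking \cite{WF} for the detailed accounting. In that sense your proposal and the paper agree: both defer the actual work to \cite{WF}.
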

		\begin{lemma}\label{WF_lemma1}[Lemma 7.8 in \cite{WF}]
			Under the same assumptions as  Lemma \ref{WF_lemma}, 
			\[
			\Big\|I_n - \frac{1}{m}\sum_{j=1}^{m}\va_j\va_j^*\Big\|\leq \epsilon,\quad \Big\|\frac{1}{m}\sum_{j=1}^{m}\va_j\va_j\zz\Big\|\leq \epsilon
			\]
			holds with probability at least $ 1-2\exp(-\gamma_\epsilon m) $ for the Gaussian model, and at least $ 1-1/n^2 $ for the admissible CDPs model.  Here, $\gamma_\epsilon > 0$ is a constant that depends on $\epsilon$.
		\end{lemma}
		\begin{lemma}\label{expectations}
			
			Let $\vx \in \C^n$ be a signal with $\|\vx\| = 1$, which is independent of the measurement samples $\va_j$, for $j = 1, \ldots, m$. Assume that the vectors $\va_j$ are drawn according to either the Gaussian model or the admissible CDPs model. Let $\vb = (b_1, b_2, \ldots, b_m) \in \C^m$ be a fixed vector, with each entry equal to $b \in \C$. Define the matrix
			\[
			S := \frac{1}{m} \sum_{j=1}^{m} 
			\begin{pmatrix}
				|\va_j^*\vx + b_j|^2 \va_j \va_j^* & (\va_j^* \vx + b_j)^2 \va_j \va_j^\top \vspace{3mm} \\
				(\vx^* \va_j + \conj{b_j})^2 \conj{\va_j} \va_j^* & |\va_j^*\vx + \conj{b_j}|^2 \conj{\va_j} \va_j^\top
			\end{pmatrix}.
			\]
			Taking the expectation of $S$, we get:
			\[
			\E[S] = 
			\begin{pmatrix}
				(\|\vx\|^2 + |b|^2) I_n + \vx \vx^* & 2 \vx \vx^\top \vspace{2mm} \\
				2 \conj{\vx} \vx^* & (\|\vx\|^2 + |b|^2) I_n + \conj{\vx} \vx^\top
			\end{pmatrix}.
			\]
		\end{lemma}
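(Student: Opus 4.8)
The statement to prove is Lemma \ref{expectations}: computing $\E(S)$ for the matrix built from the affine measurements. The plan is to compute the four blocks of $\E(S)$ separately, using independence of $\va_j$ from $\vx$ and the prescribed moment properties of the Gaussian and CDPs models. Since the $\va_j$ are i.i.d.\ (or in the CDPs case, exhibit the right low-order moments), each summand has the same expectation, so $\E(S)$ equals the expectation of a single block matrix; I would drop the index $j$ and write $\va$ for a generic sample and $b$ for the common value of $b_j$.

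First I would expand the top-left block: $\E\big[(\va^*\vx+b)(\vx^*\va+\conj b)\,\va\va^*\big] = \E\big[(|\va^*\vx|^2 + b\,\vx^*\va + \conj b\,\va^*\vx + |b|^2)\va\va^*\big]$. The term $\E[|\va^*\vx|^2\va\va^*]$ is the standard fourth-moment computation already used in Lemma \ref{WF_lemma}: for the Gaussian model a direct Isserlis/Wick expansion (or the known identity $\E[\va\va^*\vx\vx^*\va\va^*] = \|\vx\|^2 I_n + \vx\vx^*$) gives $\|\vx\|^2 I_n + \vx\vx^*$; the CDPs moment conditions $\E(d)=\E(d^2)=0$, $\E(|d|^4)=2\E(|d|^2)^2$, $\E(|d|^2)=1$ are exactly what make this same identity hold. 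The cross terms vanish because $\E[(\vx^*\va)\va\va^*]$ involves a third-order expression in $\va$ with an odd number of unconjugated factors, hence is zero by symmetry (Gaussian: odd moment; CDPs: $\E(d^2)=0$ kills it); similarly for $\E[(\va^*\vx)\va\va^*]$. Finally $\E[|b|^2\va\va^*] = |b|^2 I_n$ since $\E[\va\va^*]=I_n$. Summing gives $(\|\vx\|^2+|b|^2)I_n + \vx\vx^*$, matching the claimed top-left block.

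Next I would handle the off-diagonal block $\E\big[(\va^*\vx+b)^2\,\va\va^\top\big]$. Expanding, $(\va^*\vx+b)^2 = (\va^*\vx)^2 + 2b(\va^*\vx) + b^2$. The term $\E[(\va^*\vx)^2\va\va^\top]$ gives $2\vx\vx^\top$ — this is again the companion fourth-moment identity from Lemma \ref{WF_lemma} (and in the CDPs case uses $\E(d^2)=0$, $\E(|d|^4)=2$). The term $\E[(\va^*\vx)\va\va^\top]$ is a third-order moment that vanishes by the same parity/$\E(d^2)=0$ argument, and $\E[b^2\va\va^\top] = b^2\,\E[\va\va^\top] = 0$ because $\E[\va\va^\top]=0$ in both models (Gaussian: the pseudo-covariance of a proper complex Gaussian is zero; CDPs: $\E(d^2)=0$). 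Hence the off-diagonal block is $2\vx\vx^\top$. The bottom-left block is the conjugate transpose of this (or computed identically with roles swapped), giving $2\conj\vx\vx^*$, and the bottom-right block is the entrywise conjugate of the top-left block, giving $(\|\vx\|^2+|b|^2)I_n + \conj\vx\vx^\top$. Assembling the four blocks yields the stated formula.

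The only mildly delicate point is justifying the two fourth-moment identities $\E[|\va^*\vx|^2\va\va^*] = \|\vx\|^2 I_n + \vx\vx^*$ and $\E[(\va^*\vx)^2\va\va^\top] = 2\vx\vx^\top$ uniformly across the Gaussian and admissible CDPs models; but these are precisely the computations underlying Lemma \ref{WF_lemma} (indeed $\E(S)$ there has exactly these blocks with $b=0$), so I would simply invoke that the same moment calculations apply, and note that for the CDPs model the admissibility hypotheses $\E(d)=\E(d^2)=0$, $\E(|d|^4)=2\E(|d|^2)^2=2$ were designed to reproduce the Gaussian fourth-order behavior. Everything else is bookkeeping: the linear-in-$b$ cross terms all contain an odd-order moment of $\va$ and vanish, and the $b$-free lower-order terms reduce to $\E[\va\va^*]=I_n$ and $\E[\va\va^\top]=0$, which are Lemma-\ref{WF_lemma1}-type facts. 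No step should present a genuine obstacle.
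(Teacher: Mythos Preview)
Your proposal is correct and takes a genuinely different route from the paper's own proof. The paper proceeds by direct entrywise computation: for each model it writes out the generic entry of $\abs{\va^*\vx+b}^2\,\va\va^*$ (and of the off-diagonal block), expands the sums, and evaluates the resulting fourth-order moments one by one, doing the Gaussian and CDPs cases separately and from scratch. You instead decompose $\abs{\va^*\vx+b}^2$ and $(\va^*\vx+b)^2$ additively, reduce the $b$-independent pieces to the already-known $b=0$ expectations underlying Lemma~\ref{WF_lemma}, and argue that all linear-in-$b$ cross terms are third-order moments that vanish by symmetry, leaving only the $|b|^2\,\E[\va\va^*]=|b|^2 I_n$ and $b^2\,\E[\va\va^\top]=0$ contributions. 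Your approach is more modular and avoids redoing the fourth-moment computation, at the cost of relying on the expectation formula implicit in Lemma~\ref{WF_lemma}; the paper's approach is more self-contained. One small imprecision: for the CDPs model the summands $\va_{(l,k)}$ are \emph{not} i.i.d.\ (they share $D_l$ across $k$) and do not individually have the same expectation --- the claimed identities hold only for the average $\tfrac{1}{n}\sum_k$ over each pattern. You do hedge this (``or in the CDPs case, exhibit the right low-order moments''), and since Lemma~\ref{WF_lemma} and the facts $\E[\va\va^*]=I_n$, $\E[\va\va^\top]=0$, and vanishing of the third-order terms (the last using symmetry of $d$) all hold at the level of the full average, the argument goes through; just be explicit that for CDPs you are computing the expectation of the average rather than of a single summand.
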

		\begin{proof}
			{\bfseries 1. The Gaussian model:}
			
			Since the vector $ \vx $ and scalars $ b_j=b,\,j=1,\ldots,m $ are all independent of $ \va_j,\,j=1,\ldots,m $, the random variable $ \abs{\va_1^*\vx+b_1}^2\va_1\va_1^* $ has the same expectation as $ \abs{\va^*\vx+b}^2\va\va^* $, where $ \va=(a_1,\ldots,a_n)^\top\in\C^n $ is any Gaussian random vector. 
			By definition, we have
			\[
			\abs{\va^*\vx+b}^2=\big(\sum_{i=1}^{n}\conj{a_i}x_i+b\big)\big(\sum_{j=1}^{n}a_j\conj{x_j}+\conj{b}\big)=\sum_i\sum_{j}\conj{a_i}a_jx_i\conj{x_j}+\conj{b}\sum_{i}\conj{a_i}x_i+b\sum_{j}a_j\conj{x_j}+\abs{b}^2.
			\]
			Expanding this into the matrix $ \abs{\va^*\vx+b}^2\va\va^* $, we get:
			\begin{align*}
				\abs{\va^*\vx+b}^2\va\va^*=\Big(\sum_i\sum_{j}\conj{a_i}a_jx_i\conj{x_j}+\conj{b}\sum_{i}\conj{a_i}x_i+b\sum_{j}a_j\conj{x_j}+\abs{b}^2\Big)\begin{pmatrix}
					\abs{a_1}^2 & a_1\conj{a_2}&\cdots& a_1\conj{a_n}\\
					a_2\conj{a_1} & \abs{a_2}^2 & \cdots& a_2\conj{a_n}\\
					\vdots & \vdots & \ddots & \vdots\\
					a_n\conj{a_1} & a_n\conj{a_2} &\cdots& \abs{a_n}^2
				\end{pmatrix}.
			\end{align*}
			Denote this matrix by $P=(p_{i,j})_{i,j\in[1:n]}$. For the diagonal elements, we compute the expectation:
			\begin{align*}
				\E[p_{k,k}]&=\E\Big[\big(\sum_i\sum_{j}\conj{a_i}a_jx_i\conj{x_j}+\conj{b}\sum_{i}\conj{a_i}x_i+b\sum_{j}a_j\conj{x_j}+\abs{b}^2\big)|a_k|^2\Big]\\
				&=\E\Big[\sum_{i=j=k}|a_k|^4|x_k|^2+\sum_{i=j\neq k}|a_i|^2|a_k|^2|x_i|^2+|b|^2|a_k|^2\Big]\\
				&=2|x_k|^2+\sum_{i\neq k}|x_i|^2+|b|^2\\
				&=\|\vx\|^2+|b|^2+|x_k|^2. 
			\end{align*}
			Recall that each $ a_i\in\C\overset{i.i.d.}{\sim} \mathcal{CN}(0,1/2) $, and thus for $k\neq g$.
			\[
			\E[a_k \conj{a_g}]=\E[|a_k^2|\cdot\conj{a_g}]=0.
			\]
			Therefore, for the off-diagonal elements $  p_{k,g} (k\neq g )$, we have:
			\begin{align*}
				\E[p_{k,g}]&=\E\Big[\big(\sum_i\sum_{j}\conj{a_i}a_jx_i\conj{x_j}+\conj{b}\sum_{i}\conj{a_i}x_i+b\sum_{j}a_j\conj{x_j}+\abs{b}^2\big)a_k\conj{a_g}\Big]\\
				&=\E\Big[\sum_{i=k,j=g}|a_k|^2|a_g|^2 x_k\conj{x_g}\Big]\\
				&=x_k\conj{x_g}.
			\end{align*}
			Combining the above two results, we obtain
			\[
			\E[\abs{\va_j^*\vx+b_j}^2\va_j\va_j^*]=\E[\abs{\va^*\vx+b}^2\va\va^*]=(\|\vx\|^2+\abs{b}^2)I_n+\vx\vx^*.
			\]
			Similarly, we calculate:
			\[
			\E\big[(\va_j^*\vx+b_j)^2\va_j\va_j^\top\big]=2\vx\vx^\top,
			\]  
			which leads to the final conclusion:
			\begin{align*}
				\E[S]&=\begin{pmatrix}
					(\|\vx\|^2+|b|^2)I_n+\vx\vx^* & 2\vx\vx^\top \vspace{2mm}\\
					2\conj{\vx}\vx^* & (\|\vx\|^2+|b|^2)I_n+\conj{\vx}\vx^\top 
				\end{pmatrix}.
			\end{align*}
			{\bfseries 2. The admissible CDPs model:}
			In the admissible CDPs model (\ref{cdp}), the measurement vectors are structured as $\va_j = D_l \vf_k$, where $D_l$ is a diagonal matrix with i.i.d. entries drawn from a distribution $g$ that $ \E[|g|^2]=1 $ and $|g|<\sqrt{6}$, and $\vf_k$ is the $k$-th column of the discrete Fourier transform matrix. 
			Correspondingly, let $ b_j =b_{(l,k)}=b $. Here $ j=1,2,\ldots,m $, $m=nL$, $ l= \lceil j/n\rceil,$, and $ k=(j\,\text{mod}\,n)-1$. Then the matrix $ S $ can be formulated as
			\[
			S=\frac{1}{L}\sum_{l=1}^{L}\begin{pmatrix}
				\frac{1}{n}\sum_{k=1}^{n}\abs{\vf_k^*D_l^*\vx+b_{(k,l)}}^2D_l\vf_k\vf_k^* D_l^*& \frac{1}{n}\sum_{k=1}^{n}(\vf_k^*D_l^*\vx+b_{(k,l)})^2D_l\vf_k\vf_k^\top D_l^\top \vspace{3mm}\\
				\frac{1}{n}\sum_{k=1}^{n}(\conj{\vf_k^*D_l^*\vx}+\conj{b}_{(k,l)})^2\conj{D}_l\conj{\vf}_k\vf_k^* D_l^*& \frac{1}{n}\sum_{k=1}^{n}\abs{\vf_k^*D_l^*\vx+b_{(k,l)}}^2\conj{D}_l\conj{\vf}_k\vf_k^\top D_l^\top
			\end{pmatrix}.
			\]
			To compute the expectation of $ \frac{1}{n}\sum_{k=1}^{n}\abs{\vf_k^*D_l^*\vx+b_{(k,l)}}^2D_l\vf_k\vf_k^* D_l^* $, it is sufficient to calculate the expectation of $ \frac{1}{n}\sum_{k=1}^{n}\abs{\vf_k^*D^*\vx+b}^2D\vf_k\vf_k^* D^* $. Let $ \omega=\exp(2\pi i/n) $ be the $ n $-th root of unity.
			By definition, we know
			\[
			\vf_k^*=(\omega^{-0(k-1)},\omega^{-1(k-1)},\ldots,\omega^{-(n-1)(k-1)})  
			\]
			and
			\begin{align*}
				&\frac{1}{n}\sum_{k=1}^{n}\abs{\vf_k^*D^*\vx+b}^2D\vf_k\vf_k^* D^* \\
				&=\frac{1}{n}\sum_{k=1}^{n}\Big(\sum_{i,\,j}\conj{d_i}d_j\omega^{(-i+j)(k-1)}x_i\conj{x_j}+\conj{b}\sum_{i}\conj{d_i}\omega^{-(i-1)(k-1)}x_i+b\sum_{j}d_j\omega^{(j-1)(k-1)}\conj{x_j}+\abs{b}^2\Big)\\
				&\quad \quad \begin{pmatrix}
					\abs{d_1}^2 & d_1\conj{d_2}\omega^{-(k-1)}&\cdots& d_1\conj{d_n}\omega^{-(n-1)(k-1)}\\
					d_2\conj{d_1}\omega^{(k-1)} & \abs{d_2}^2 & \cdots& d_2\conj{d_n}\omega^{-(n-2)(k-1)}\\
					\vdots & \vdots & \ddots & \vdots\\
					d_n\conj{d_1}\omega^{(n-1)(k-1)} & d_n\conj{d_2}\omega^{(n-2)(k-1)} &\cdots& \abs{d_n}^2
				\end{pmatrix}.
			\end{align*}
			Therefore, through calculation, the expected value of the $ s $-th diagonal element is
			\begin{align*}
				&\frac{1}{n}\sum_{k=1}^{n}\E\Big[\big(\sum_{i,\,j}\conj{d_i}d_j\omega^{(-i+j)(k-1)}x_i\conj{x_j}+\conj{b}\sum_{i}\conj{d_i}\omega^{-(i-1)(k-1)}x_i+b\sum_{j}d_j\omega^{(j-1)(k-1)}\conj{x_j}+\abs{b}^2\big)|d_s|^2\Big]\\
				&=\E\Big[\sum_{i=j=s}|d_s|^4|x_s|^2+\sum_{i=j\neq s}|d_i|^2|d_s|^2|x_i|^2+|b|^2|d_s|^2\Big]\\
				&=2|x_s|^2+\sum_{i\neq s}|x_i|^2+|b|^2\\
				&=\|\vx\|^2+|x_s|^2+|b|^2.
			\end{align*}  
			For the off-diagonal elements ($ s\neq t $), we compute the expectation as
			\begin{align*}
				&\frac{1}{n}\sum_{k=1}^{n}\E\Big[\big(\sum_{i,\,j}\conj{d_i}d_j\omega^{(-i+j)(k-1)}x_i\conj{x_j}+\conj{b}\sum_{i}\conj{d_i}\omega^{-(i-1)(k-1)}x_i+b\sum_{j}d_j\omega^{(j-1)(k-1)}\conj{x_j}+\abs{b}^2\big)d_g\conj{d_h}\Big]\\
				&=\frac{1}{n}\sum_{k=1}^{n}\E\Big[\sum_{i=s,j=t}|d_s|^2|d_t|^2\omega^{(-s+t)(k-1)}x_s\conj{x_t}\Big]\\
				&=x_s\conj{x_t}.
			\end{align*}
			The results imply
			\[
			\E\Big[\frac{1}{n}\sum_{k=1}^{n}\abs{\vf_k^*D^*\vx+b}^2D\vf_k\vf_k^* D^*\Big]=(\|\vx\|^2+\abs{b}^2)I_n+\vx\vx^*.
			\]
			Similarly, the expectation of $ \frac{1}{n}\sum_{k=1}^{n}(\vf_k^*D_l^*\vx+b_{(k,l)})^2D_l\vf_k\vf_k^\top D_l^\top  $ can be computed in a similar fashion, yields the final conclusion.
		\end{proof}
		\begin{lemma}\label{expectation_lemma}
			Under the setup of Lemma \ref{expectations}, define the matrix
			\[
			S := \frac{1}{m}\sum_{j=1}^{m}\begin{pmatrix}
				\abs{\va_j^*\vx+b_j}^2\va_j\va_j^* & (\va_j^*\vx+b_j)^2\va_j\va_j^\top \vspace{3mm}\\
				(\vx^*\va_j+\conj{b_j})^2\conj{\va_j}\va_j^* & \abs{\va_j^*\vx+b_j}^2\conj{\va_j}\va_j^\top
			\end{pmatrix}.
			\]
			For any $\epsilon > 0$ and some constant $\gamma_\epsilon > 0$, if the number of samples in the Gaussian model satisfies $m \geq C_\epsilon n \log n$, and the number of patterns in the admissible CDPs model satisfies $L \geq C_\epsilon \log^3 n$, where $C_\epsilon$ is a sufficiently large constant, then
			\[
			\|S - \E[S]\| \leq \epsilon \|\E[S]\|
			\]
			holds with probability at least $1 - 20 \exp(-\gamma_\epsilon n) - 16/n^2$ for the Gaussian model, and $1 - 2(2L + 1)/n^3 - 2/n^2$ for the admissible CDPs model.
		\end{lemma}
		\begin{proof}	
			By comparing this matrix with the matrix $W$ given in Lemma \ref{WF_lemma}, we rewrite
			\[
			S = W + S_1 + S_2,
			\]
			where
			\[
			S_1:= \frac{1}{m}\sum_{j=1}^{m}\begin{pmatrix}
				\abs{b_j}^2\va_j\va_j^*& (b_j)^2\va_j\va_j^\top \vspace{3mm}\\
				(\conj{b_j})^2\conj{\va_j}\va_j^*& \abs{b_j}^2\conj{\va_j}\va_j^\top
			\end{pmatrix}
			\]
			and
			\[ S_2 := \frac{2}{m}\sum_{j=1}^{m}\begin{pmatrix}
				\Re(\conj{b_j}\cdot \va_j^*\vx)\va_j\va_j^*& (b_j\cdot \va_j^*\vx)\va_j\va_j^\top \vspace{3mm}\\
				(\conj{b_j}\cdot \conj{\va_j^*\vx})\conj{\va_j}\va_j^*& \Re(\conj{b_j}\cdot \va_j^*\vx)\conj{\va_j}\va_j^\top
			\end{pmatrix}.
			\] 
			From Lemma \ref{WF_lemma} and Lemma \ref{WF_lemma1}, we have
			\[
			\|W-\E[W]\|\leq \epsilon/3 \|\E[W]\|
			\]
			and 
			\[
			\|S_1 - \E[S_1]\|\leq \epsilon/3 \|\E[S_1]\|
			\]
			with probability at least $ 1-12\exp(-\gamma_\epsilon n)-8/n^2 $ for the Gaussian model, and $ 1-(2L+1)/n^3 -1/n^2$ for the admissible CDPs model.
			Thus, to complete the proof, it remains to show that
			\[
			\|S_2 - \E[S_2]\|\leq \epsilon/3.
			\]
			To establish this, we need to prove the following two inequalities for a fixed $ \vx $:
			\begin{equation}\label{inequality1}
				\Big\|\frac{1}{m}\sum_{j=1}^{m}\Re(\conj{b_j}\cdot \va_j^*\vx)\va_j\va_j^*\Big\|\leq \epsilon/24,
			\end{equation}
			and 
			\begin{equation}\label{inequality2}
				\Big\|\frac{1}{m}\sum_{j=1}^{m}(b_j\cdot \va_j^*\vx)\va_j\va_j^\top\Big\|\leq \epsilon/24.
			\end{equation}
			By unitary invariance, we can assume that $ \vx=\ve_1 $. To prove inequality (\ref{inequality1}), it suffices to show that 
			\begin{align}\label{inequality12}
				T(\vv):=\Big|\frac{1}{m}\sum_{j=1}^{m}\big(\Re(b\cdot\conj{\va_j(1)})\big)|\va_j^*\vv|^2\Big|\leq \epsilon/24.
			\end{align}
			holds with high probability for any $ \vv\in \C^n $ with $ \|\vv\|=1 $. 
			\newline
			\noindent
			{\bfseries The Gaussian Model:}
			
			Following the method used in paper \cite{WF}, we first define the event $ E_0 $, where the following inequalities hold for any $ \epsilon_0>0 $:
			\[
			\frac{1}{m}\sum_{j=1}^{m}(|\va_j(1)|^2-1)\leq \epsilon_0,\quad \frac{1}{m}\sum_{j=1}^{m}(|\va_j(1)|^4-2)\leq \epsilon_0,\quad \max_{1\leq j\leq m}|\va_j(1)|\leq \sqrt{10\log m}.
			\]
			By Chebyshev's inequality, the event $ E_0 $ holds with probability at least $ 1-4n^{-2} $, provided that $ m\geq Cn $ for a sufficiently large constant $ C $.
			
			We now proceed to prove (\ref{inequality12}). For a fixed $ \vv $, let $ \vv = (\vv(1),\tilde{\vv}) $, and partition $ \va_j$ in the same way. Then 
			\begin{align*}
				T(\vv)&=\Big|\frac{1}{m}\sum_{j=1}^{m}\Re(b\cdot\conj{\va_j(1)})\cdot\big(|\conj{\va_j(1)}|^2|\vv(1)|^2+|\tilde{\va}_j^*\tilde{\vv}|^2+2\Re (\conj{\va_j(1)}\vv(1)\tilde{\va}_j^*\tilde{\vv})\big)\Big|\\
				&\leq\Big|\frac{1}{m}\sum_{j=1}^{m}\Re(b\cdot\conj{\va_j(1)})\big(|\conj{\va_j(1)}|^2|\vv(1)|^2+2\Re (\conj{\va_j(1)}\vv(1)\tilde{\va}_j^*\tilde{\vv})\big)\Big|\\ 
				&\quad\quad+\Big|\frac{1}{m}\sum_{j=1}^{m}\Re(b\cdot\conj{\va_j(1)})|\tilde{\va}_j^*\tilde{\vv}|^2\Big|\\
				&:=T_1 + T_2.
			\end{align*}
			For the term $ T_1 $, using Hoeffding's inequality, for any $ \epsilon>0 $ and $ \gamma >0 $, there exists a sufficiently large constant $ C_{(\epsilon, \gamma)} $ such that when $ m\geq C_{(\epsilon, \gamma)}|b|\sqrt{n\sum_{j=1}^{m}|\va_j(1)|^4}$,
			\[
			T_1\leq \epsilon/96
			\]  
			holds with probability at least $ 1-2\exp(-2\gamma n) $. Next, we control the term $ T_2 $ by Bernstein-type inequality. For any $ \epsilon>0 $ and $ \gamma>0 $, when $ m\geq C_{(\epsilon, \gamma)}|b|\sqrt{n\sum_{j=1}^{m}|\va_j(1)|^2}+n\max_{j=1}^{m}|\va_j(1)| $, 
			\[
			T_2\leq \epsilon/96
			\]
			holds with probability at least $ 1-2\exp(-2\gamma n) $. Thus, for a fixed $ \vv $, we have $ T(\vv)\leq \epsilon/48 $ with probability at least $ 1-4\exp(-2\gamma n) $.  
			
			Now, we extend this result to any $ \vv\in\C^n $. Let $ \mathcal{N} $ be a $ 1/4 $-net of the unit sphere in $ \C^n $.  Then
			\[
			\max_{\vv\in\C^n}T(\vv)\leq 2\max_{\vv\in\mathcal{N}}T(\vv)\leq \epsilon/24
			\]
			holds with probability at least $ 1-4\exp(-\gamma n) $, provided that $ m\geq C\big(\sqrt{n\sum_{j=1}^{m}|\va_j(1)|^4} + \sqrt{n\sum_{j=1}^{m}|\va_j(1)|^2}+n\max_{j=1}^{m}|\va_j(1)|\big) $.
			Under the event $ E_0 $, when $ m\geq Cn\log n $, the inequality (\ref{inequality12}) holds with high probability, provided $ C $ sufficiently large. Thus, (\ref{inequality1}) holds with probability at least $ 1-4\exp(-\gamma n)-4n^{-2} $.
			
			Using a similar approach, we can prove inequality (\ref{inequality2}). Therefore, $\|S_2 - \E[S_2]\|\leq \epsilon/3 $ holds with probability at least $ 1-8\exp(-\gamma n)-4n^{-2} $.

			Combining this with the earlier results $\|W-\E[W]\|\leq \epsilon/3 \|\E[W]\|$ and $\|S_1-\E[S_1]\|\leq \epsilon/3 \|\E[S_1]\|$,  we conclude that for any $ \epsilon>0 $,
			\[
			\|S-\E[S]\|\leq \epsilon \|\E[S]\|
			\]
			holds with probability greater than $ 1-20\exp(-\gamma_\epsilon n)-16/n^2 $, where $ \gamma_\epsilon $ is a positive constant.
			
			For the admissible CDPs model, the same strategy can be applied, and the details are omitted here for brevity. 
		\end{proof}
		\begin{lemma}\label{eiganalysis}
			Given vectors $ \vu\in\C^n $, $ \vv\in\C^n $ and a scalar $ \beta>\|\vu\|^2 $,  define the Hermitian matrix 
			\[
			E = \begin{pmatrix}
				\beta I_n-\vu\vu^*+2\vv\vv^* & 2\vv\vv^\top\\
				2\conj{\vv}\vv^* & \beta I_n-\conj{\vu}\vu^\top+2\conj{\vv}\vv^\top
			\end{pmatrix}.
			\]
			The matrix $ E $ is invertible, with extreme eigenvalues bounded by $ \lambda_{\max}\leq \beta+4\|\vv\|^2 $ and $ \lambda_{\min}=\beta-\|\vu\|^2 $.
		\end{lemma}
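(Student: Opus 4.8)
The plan is to realize $E$ as a positive semidefinite rank-one perturbation of a block-diagonal Hermitian matrix whose spectrum is completely explicit. First I would write
\[
E \;=\; D \;+\; 2\,\mathbf{w}\mathbf{w}^*,\qquad
D := \begin{pmatrix} \beta I_n-\vu\vu^* & \boldsymbol{0}\\ \boldsymbol{0} & \beta I_n-\conj{\vu}\vu^\top\end{pmatrix},\qquad
\mathbf{w} := \begin{pmatrix}\vv\\ \conj{\vv}\end{pmatrix}\in\C^{2n},
\]
and check that the $(1,2)$ and $(2,1)$ blocks of $2\mathbf{w}\mathbf{w}^*$ are exactly $2\vv\vv^\top$ and $2\conj{\vv}\vv^*$, so that this identity indeed holds. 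Since $\vu\vu^*$ is rank one with nonzero eigenvalue $\|\vu\|^2$, and $\conj{\vu}\vu^\top=(\vu\vu^*)^\top$ has the same spectrum, $D$ is Hermitian with eigenvalue $\beta$ of multiplicity $2n-2$ and eigenvalue $\beta-\|\vu\|^2$ of multiplicity $2$, the corresponding eigenspace being $\mathrm{span}\{(\vu,\boldsymbol{0})^\top,(\boldsymbol{0},\conj{\vu})^\top\}$ when $\vu\neq\boldsymbol{0}$ (and all of $\C^{2n}$ when $\vu=\boldsymbol{0}$). Because $\beta>\|\vu\|^2$ by hypothesis, $D$ is positive definite with $\lambda_{\min}(D)=\beta-\|\vu\|^2$ and $\lambda_{\max}(D)=\beta$.

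Next I would observe that $2\mathbf{w}\mathbf{w}^*$ is positive semidefinite of rank one, with its single nonzero eigenvalue equal to $2\|\mathbf{w}\|^2=4\|\vv\|^2$ (using $\|\conj{\vv}\|=\|\vv\|$). Hence $E\succeq D\succ 0$, which already shows $E$ is invertible. Weyl's inequality then yields
\[
\lambda_{\min}(E)\ \ge\ \lambda_{\min}(D)\ =\ \beta-\|\vu\|^2,\qquad
\lambda_{\max}(E)\ \le\ \lambda_{\max}(D)+\big\|2\mathbf{w}\mathbf{w}^*\big\|\ =\ \beta+4\|\vv\|^2,
\]
which is the asserted upper bound on $\lambda_{\max}$ and half of the assertion on $\lambda_{\min}$.

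The only remaining point --- and the one step that needs a genuine, if elementary, argument --- is that $\lambda_{\min}(E)$ equals $\beta-\|\vu\|^2$ exactly, not something larger. Here I would produce a nonzero vector $\xi$ lying simultaneously in the minimal eigenspace of $D$ and in the hyperplane $\{\xi:\mathbf{w}^*\xi=0\}$: such $\xi$ exists because that eigenspace has dimension at least $2$ (always $\ge 2$ since $2n\ge 2$) while the orthogonality condition imposes a single linear constraint, so the intersection is nontrivial. For this $\xi$,
\[
\xi^*E\xi \;=\; \xi^*D\xi \;+\; 2\,\lvert\mathbf{w}^*\xi\rvert^2 \;=\; \big(\beta-\|\vu\|^2\big)\|\xi\|^2,
\]
so $\lambda_{\min}(E)\le\beta-\|\vu\|^2$, and combined with the reverse inequality we get equality. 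I do not anticipate any serious obstacle: the block decomposition $E=D+2\mathbf{w}\mathbf{w}^*$ is the whole idea, and everything else is the spectrum of a rank-one update together with the trivial dimension count that supplies the test vector.
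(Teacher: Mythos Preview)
Your proposal is correct and proceeds from the same starting decomposition $E=D+2\mathbf{w}\mathbf{w}^*$ that the paper uses (there written $E_1+2\vs\vs^*$), but the analysis from that point on is genuinely different. The paper argues invertibility via the Sherman--Morrison formula, first checking that $E_1$ is invertible and then that $1+2\vs^*E_1^{-1}\vs\neq 0$; it then computes the full list of eigenvalues by solving $\det(E-\lambda I_{2n})=0$ explicitly, obtaining four expressions for $\lambda$ and bounding the extremal ones. Your route is shorter and more structural: positive semidefiniteness of the rank-one perturbation plus Weyl's inequality give the upper bound and the lower bound on $\lambda_{\min}(E)$ simultaneously, and the dimension count (the $2$-dimensional minimal eigenspace of $D$ meets the hyperplane $\mathbf{w}^\perp$ nontrivially) produces an actual eigenvector for $\beta-\|\vu\|^2$, pinning down equality. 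What the paper's approach buys is the complete spectrum of $E$; what yours buys is a cleaner argument that avoids the quadratic in $\beta$ entirely and makes the role of the hypothesis $\beta>\|\vu\|^2$ transparent (it is exactly $D\succ 0$).
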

		\begin{proof}
			We begin by rewriting the matrix $ E $ as
			\begin{align*}
				E &= \begin{pmatrix}
					\beta I_n-\vu\vu^* & \boldsymbol{0}\\
					\boldsymbol{0} & \beta I_n-\conj{\vu}\vu^\top
				\end{pmatrix}+2\begin{pmatrix}
					\vv\\ \conj{\vv}
				\end{pmatrix}\begin{pmatrix}
					\vv^*& \vv^\top
				\end{pmatrix},
			\end{align*}
			which can be expressed as
			\[
			E:=E_1+2\vs\vs^*,
			\]
			where
			\[
			E_1:=\begin{pmatrix}
				\beta I_n-\vu\vu^* & \boldsymbol{0}\\
				\boldsymbol{0} & \beta I_n-\conj{\vu}\vu^\top
			\end{pmatrix}
			\] and $ \vs^*=\begin{pmatrix}
				\vv^*& \vv^\top
			\end{pmatrix} $.
			
			We claim that $ E $ is invertible if $\beta>\|\vu\|^2 $.  According to Sherman-Morrison formula, $ E $ being invertible is equivalent to 
			\begin{itemize}
				\item $ E_1 $ is invertible,
				\item $ 1+2\vs^*E_1^{-1}\vs\neq 0 $.
			\end{itemize}
			{\bfseries Step 1: Invertibility of $ E_1 $ }
			
			To check if $ E_1 $ is invertible, we first note that $ E_1 $ is block diagonal, so its invertibility is determined by the invertibility of  $\beta I_n -\vu\vu^*  $. 
			Using the Sherman-Marrison formula, we know that $\beta I_n -\vu\vu^*  $ is invertible if and only if $ \beta\neq 0 $ and $ \beta\neq \|\vu\|^2 $. Moreover, when invertible, we have
			\[
			(\beta I_n -\vu\vu^*)^{-1} = \frac{I_n}{\beta}+\frac{\vu\vu^*}{\beta(\beta-\|\vu\|^2)}.
			\]
			Thus, when 
			\begin{align}\label{req1}
				\beta\neq 0\quad\textup{and} \quad \beta\neq \|\vu\|^2,
			\end{align}
			$ E_1 $ is invertible, and its inverse is given by 
			\begin{align}\label{inve1}
				E_1^{-1}=\begin{pmatrix}
					\frac{I_n}{\beta}+\frac{\vu\vu^*}{\beta(\beta-\|\vu\|^2)}&\boldsymbol{0}\\
					\boldsymbol{0}&\frac{I_n}{\beta}+\frac{\conj{\vu}\vu^\top}{\beta(\beta-\|\vu\|^2)}
				\end{pmatrix}.
			\end{align}
			{\bfseries Step 2: Condition for $ 1+2\vs^*E_1^{-1}\vs\neq 0 $ }
			
			We now check the second requirement. Using the definition of $ \vs $ and the expression for $ E_1^{-1}$ (\ref{inve1}), the condition becomes
			\[
			\frac{\|\vv\|^2}{\beta}+\frac{|\vv^*\vu|^2}{\beta(\beta-\|\vu\|^2)}\neq -\frac{1}{4}.
			\]
			Simplifying this, we arrive at the condition 
			\begin{align}\label{req2}
				\beta\neq\frac{1}{2}\|\vu\|^2-2\|\vv\|^2\pm \frac{1}{2}\sqrt{\big(4\|\vv\|^2+\|\vu\|^2\big)^2-16|\vu^*\vv|^2}
			\end{align}
			Furthermore, we observe that 
			\begin{align*}
				\frac{1}{2}\|\vu\|^2-2\|\vv\|^2+ \frac{1}{2}\sqrt{\big(4\|\vv\|^2+\|\vu\|^2\big)^2-16|\vu^*\vv|^2}&\leq \|\vu\|^2,\\
				\frac{1}{2}\|\vu\|^2-2\|\vv\|^2- \frac{1}{2}\sqrt{\big(4\|\vv\|^2+\|\vu\|^2\big)^2-16|\vu^*\vv|^2}&\geq -4\|\vv\|^2.
			\end{align*}
			Thus, when $ \beta>\|\vu\|^2 $, conditions (\ref{req1}) and (\ref{req2}) are satisfied simultaneously, and we can conclude that $ E $ is invertible.
			
			Next, we analyze the eigenvalues of $ E $ by solving the equation $ \det(E-\lambda I_{2n})=0 $. Recall that
			\begin{align*}
				E-\lambda I_{2n} &= \begin{pmatrix}
					(\beta-\lambda )I_n-\vu\vu^* & \boldsymbol{0}\\
					\boldsymbol{0} & (\beta -\lambda)I_n-\conj{\vu}\vu^\top
				\end{pmatrix}+2\begin{pmatrix}
					\vv\\ \conj{\vv}
				\end{pmatrix}\begin{pmatrix}
					\vv^*& \vv^\top
				\end{pmatrix}\\
				&:=E_2+2\vs\vs^*.
			\end{align*}
			Then we only need to find $ \lambda $, under which $ E_2+2\vs\vs^* $ is invertible. 
			Similar to the process we had already calculated, the solutions to $  \det(E-\lambda I_{2n})=0  $ are given by 
			$$ 
			\left\{  
			\begin{aligned}
				\lambda&=\beta,   \\  
				\lambda&=\beta-\|\vu\|^2,  \\  
				\lambda&=\beta-\Big(\frac{1}{2}\|\vu\|^2-2\|\vv\|^2+ \frac{1}{2}\sqrt{\big(4\|\vv\|^2+\|\vu\|^2\big)^2-16|\vv^*\vu|^2}\Big)\geq\beta-\|\vu\|^2 \\
				\lambda&=\beta -\Big(\frac{1}{2}\|\vu\|^2-2\|\vv\|^2-\frac{1}{2}\sqrt{\big(4\|\vv\|^2+\|\vu\|^2\big)^2-16|\vv^*\vu|^2}\Big) \leq\beta+4\|\vv\|^2
			\end{aligned} 
			\right. .
			$$
			Therefore, we conclude that
			\begin{align*}
				\lambda_{\min} &= \beta-\|\vu\|^2, \\
				\lambda_{\max} &\leq \beta+4\|\vv\|^2.
			\end{align*}
		\end{proof}
		{
			\bibliographystyle{plain}
			\bibliography{bib_affine_phase_retrieval}
		}
	\end{document}